\newtheorem{lemma}{Lemma}
\newtheorem{theorem}{Theorem}
\def\cO{\mathcal{O}}
\def\bl{{\bf \Lambda}}
\def\one{{\mathchoice {\rm 1\mskip-4mu l} {\rm 1\mskip-4mu l} {\rm
1\mskip-4.5mu l} {\rm 1\mskip-5mu l}}}
\def\dqc1{\textsc{DQC1}}
\newtheorem*{rep@theorem}{\rep@title}
\newcommand{\newreptheorem}[2]{%
\newenvironment{rep#1}[1]{%
 \def\rep@title{#2 \ref{##1}}%
 \begin{rep@theorem}}%
 {\end{rep@theorem}}}
\newtheorem{corollary}{Corollary}
\newtheorem*{remark*}{Remark}
\begin{document}
\title{Hamiltonian simulation in the low energy subspace}

\author{Burak \c{S}ahino\u{g}lu}\email{sahinoglu@lanl.gov}
\address{Theoretical Division, Los Alamos National Laboratory, Los Alamos, NM 87545, USA}

\author{
Rolando D. Somma}\email{somma@lanl.gov}
\affiliation{Theoretical Division, Los Alamos National Laboratory, Los Alamos, NM 87545, USA}

\date{\today}

\begin{abstract}
We study the problem of simulating the dynamics of spin systems when the initial state is supported on a subspace of low-energy of a Hamiltonian $H$. 
This is a central problem in physics with vast applications in many-body systems and beyond, where the interesting physics takes place in the low-energy sector.
We analyze error bounds induced by product formulas that approximate the evolution operator and show that these bounds depend on an effective low-energy norm of $H$. 
We find improvements over the best previous complexities of product formulas that apply to the general case, and these improvements are more significant
for long evolution times that scale with the system size and/or small approximation errors.
To obtain these improvements, we prove 
exponentially-decaying upper bounds on the leakage to high-energy subspaces due to the product formula.
Our results provide a path to a systematic study of Hamiltonian simulation at low energies, which will be required to push quantum simulation closer to reality.
\end{abstract}

\maketitle

\section{Introduction}
The simulation of quantum systems is believed
to be one of the most important applications of quantum computers~\cite{Fey82}. 
Many quantum algorithms for simulating quantum dynamics exist~\cite{Llo96,AT03,BAC07,WBH+10,BCC+14,BCC+15,LC17,LC19,Cam19,haah2018quantum}, with applications in physics~\cite{SOGKL02,JLP12}, quantum chemistry~\cite{WBCH14,BBK+15,BBK++15}, and beyond~\cite{CKS17}. While these algorithms are deemed efficient and run in time polynomial in factors such as system size, ongoing work has significantly improved the performance of such approaches. 
These improvements are important to explore the power of quantum computers and push quantum simulation closer to reality.

Leading {\em{Hamiltonian simulation}} methods are based on a handful of techniques.
A main example is the  product formula, which approximates the
evolution of a Hamiltonian $H$ by short-time evolutions under the terms that compose $H$~\cite{Suz90,Suz91,BAC07,WBH+10}. 
Each such evolution can be decomposed as a sequence of two-qubit gates~\cite{SOGKL02} to build up a quantum algorithm.
Product formulas are attractive for various reasons: 
they are simple, intuitive, and their implementations may not require ancillary qubits, which contrasts other sophisticated methods as those in Refs.~\cite{BCC+15,LC17}. 
Product formulas are also the basis of classical simulation algorithms
including path-integral Monte Carlo~\cite{NB98}.

Recent works provide refined error bounds of product formulas~\cite{Som16,CS19,CST+19,CBC20}. These works regard various settings, such as when $H$ is a sum of spatially-local terms or when these terms satisfy Lie-algebraic properties. 
Nevertheless, while these improvements are important and necessary, a number of shortcomings remain. 
For example, the best-known complexities of product formulas scale poorly with the norm of $H$
or its terms, which can be very large or unbounded,
even when the evolved quantum system does not explore {\em high-energy} states. 
These complexities may be improved under physically-relevant assumptions on energy scales.
In fact, numerical simulations of few spin systems suggest that product formulas applied to low-energy states lead to much lower errors than that of the worst-case. Figure~1, for example, 
shows these errors for a 2x6 spin-1/2 Heisenberg model,
suggesting that a complexity improvement is possible under a low-energy assumption on the initial state. Simulation results for related models present similar features. Nevertheless,
our inability of simulating larger quantum systems with classical computers efficiently demands for analytical tools to actually demonstrate strict improvements on complexities of product formulas that apply generally.

\begin{figure}
\includegraphics[angle=90, scale=0.27]{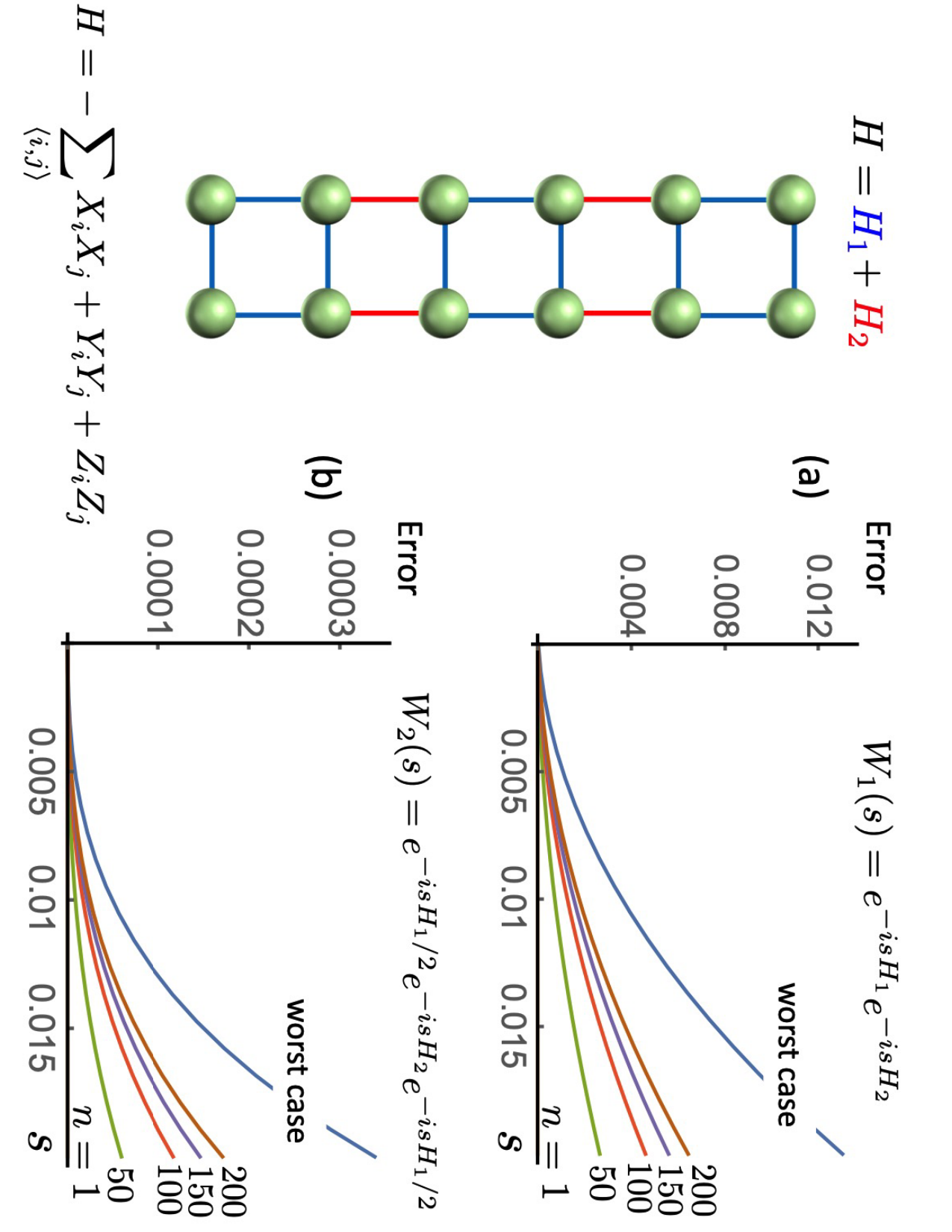}
\caption{Worst case vs. low-energy Trotter errors: Errors induced by product formulas for a 2x6
Heisenberg spin-1/2 model. The Hamiltonian is $H= -\sum_{\langle i, j \rangle} X_i X_j + Y_i Y_j + Z_i Z_j$, where $X_i$, $Y_i$, and $Z_i$ are the Pauli operators for the $i$th spin, and $H=H_1+H_2$, where $H_1$ and $H_2$ are the interaction terms represented by blue and red bonds, respectively.
(a) The evolution operator for time $s$, $U(s)=e^{-isH}$, is approximated by the first order product formula $W_1(s)=e^{-i sH_1}e^{-i sH_2}$. The plot shows the largest approximation errors when acting on various low-energy subspaces associated with increasing energies, labeled by $n=1,50,150,200$, and in the worst case. 
(b) Similar results for when the evolution operator $U(s)=e^{-isH}$ is approximated by the second order product formula $W_2(s)=e^{-i sH_1/2}e^{-i sH_2}e^{-i sH_1/2}$.}
\label{fig:Simulation}
\end{figure}

To this end, we investigate the Hamiltonian simulation problem when the initial state is supported on a {\em low-energy subspace}.
This is a central problem in physics that has vast applications, including the simulation of condensed matter systems for studying quantum phase transitions~\cite{Sac11}, the simulation of quantum field theories~\cite{JLP12}, the simulation of adiabatic quantum state preparation~\cite{FGGS00,BKS10}, and more.
We analyze the complexities of product formulas in this setting and
show significant improvements with respect to the best known complexity bounds that apply to the general case.

\section{Results}

\subsection{Overview}
Our main result is that, for a local Hamiltonian on $N$ spins
$H= \sum_l H_l$  with $H_l \ge 0$,
the error induced by a $p$-th order product formula is $\cO((\Delta' s)^{p+1})$, where $s$ is a (short) time parameter and $\Delta'$ is an effective low-energy norm of $H$.
This norm depends on $\Delta$, which is an energy associated with the initial state, but also depends on $s$ and other parameters that define $H$.
The best known error bounds for product formulas
that apply to the general case depend on the $\|H_l\|$'s~\cite{CST+19}.
(Throughout this paper, $\|.\|$ refers to the spectral norm.) Thus, an improvement in the complexity of product formulas is possible when $\Delta' \ll \max_l \|H_l\|$, which can occur for sufficiently small values of $\Delta$ and $s$. Such values of $s$ appear in low-order product formulas (e.g., first order) or, for larger order, when the overall evolution time $t$ is sufficiently large and/or the desired approximation error $\varepsilon$ is sufficiently small. 
We summarize some of the complexity improvements in
Table~I.

\begin{table}
\begin{tabular}{ |p{0.9cm}||p{2.4cm}|p{3.4cm}| }
\hline
Order & Previous result & Low-energy simulation\\
\hline
$p=1$  & $\cO(\frac{\tau^2 N}{\varepsilon})$    & $ \tilde \cO( \frac{\tau^2}{\varepsilon})+ \cO(\frac{\tau^{4/3} N^{2/3}}{\varepsilon^{1/3}})$ \\
\hline
$p=2$  & $\cO(\frac{\tau^{3/2} N^{1/2}}{\varepsilon^{1/2}} )$ & $ \tilde \cO(\frac{\tau^{3/2}}{\varepsilon^{1/2}}) + \cO( \frac{\tau^{6/5} N^{3/5}}{\varepsilon^{1/5}})$ \\
\hline
$p=3$  & $\cO(\frac{\tau^{4/3} N^{1/3}}{\varepsilon^{1/3}})$& $\tilde \cO(\frac{\tau^{4/3}}{\varepsilon^{1/3}})+ \cO( \frac{\tau^{8/7} N^{4/7}}{\varepsilon^{1/7}})$ \\
\hline
\end{tabular}
\caption{Improvements of low-energy simulation: Comparison between the best-known complexity~\cite{CST+19} and the complexity of low-energy simulation for $p$-th order product formulas. Results show the Trotter number for constant $\Delta$ and local Hamiltonians on $N$ spins with constant degree and strength bounded by $J$, and $\tau=|t|J$. $\varepsilon$ is the approximation error. The $\tilde \cO$ notation hides polylogarithmic factors in $\tau/\varepsilon$.}
\label{fig:Comparison1}
\end{table}

To obtain our results, we introduce the notion of  effective Hamiltonians that are basically the $H_l$'s restricted to act on a low-energy subspace.
The relevant norms of these effective operators is bounded by $\Delta'$.  
One could then proceed to simulate the evolution using a product formula that involves effective Hamiltonians and obtain an error bound that matches ours.
A challenge is that these effective Hamiltonians are generally non-local and difficult to compute. Methods such as the local Schrieffer-Wolff transformation~\cite{glazek1993renormalization, bravyi2011schrieffer} work only at the perturbative regime and numerical renormalization group methods for spin systems~\cite{gu2008tensor, evenbly2009algorithms} have been studied only for a handful of models, while a general analytical treatment does not exist. 
Thus, efficient methods to simulate time evolution of effective Hamiltonians are lacking.
We address this challenge by showing that evolutions under the effective Hamiltonians can be approximated by evolutions under the original $H_l$'s with a suitable choice of $\Delta'$. 
This result is key in our construction and may find applications elsewhere.

Our main contributions are based on a number of technical lemmas and corollaries that are given in the Methods section and proven in detail in Supplementary Information.

\subsection { Product formulas and effective operators} 
For a time-independent Hamiltonian $H=\sum_{l=1}^L H_l$, where each $H_l$ is Hermitian, the evolution operator for time $t$ is $U(t)=e^{-it H}$. 
Product formulas provide a way of approximating $U(t)$ as a product of exponentials, each being a short-time evolution under some $H_l$.
For $p > 0$ integer and $s \in \mathbb R$, a $p$-th order product formula is a unitary
\begin{align}\label{eq:productformula} 
W_p(s)
     =  e^{- i s_q  H_{l_q}}   \cdots   e^{- i s_2  H_{l_2}}  e^{- i s_1  H_{l_1}}\; ,
\end{align}
where each $s_j \in \mathbb R$ is proportional to $s$ and $1 \le l_j \le L$. The number of terms
in the product may depend on $p$ and $L$, and we
assume $L=\cO(1)$, $q=\cO(1)$. (The more general case is analyzed in Supplementary Information.) We define $|{\bf s}|=\sum^{q}_{j=1} |s_j|$ and also assume $|{\bf s}|=\cO(|s|)$. The $p$-th order product formula satisfies $\| U(s) - W_{p}(s) \|=\cO((h|s|)^{p+1})$, where $h=\max_l \|H_l\|$~\cite{BAC07}. 
One way to construct $W_p(s)$ is to apply
a recursion in Refs.~\cite{Suz90,Suz91}.
These are known as Trotter-Suzuki approximations
and satisfy the above assumptions.

By breaking the time interval $t$ into $r$ steps of sufficiently small size $s$,  product formulas
can approximate $U(t)$ as $U(t) \approx (W_{p}(s))^{r}$. We will refer to $r$
as the Trotter number, and this number
will define the complexity of product
formulas that simulate $U(t)$ within given accuracy. 
Note that the total number of terms
in the product formula is actually $qMr=\cO(Mr)$, where $M$ is the number of terms in the product decomposition of each $e^{-is_j H_{l_j}}$.

Known error bounds for product formulas that apply to the general case grow with $h$
and can be large.
However, error bounds for approximating the evolved state $U(t) \ket \psi$ may be better
under the additional assumption that $\ket \psi$ is supported on a low-energy subspace.
We then analyze the case where the initial state satisfies $\Pi_{\le \Delta} \ket \psi = \ket \psi$, where $\Pi_{\le \Lambda}$ is the projector into the subspace spanned by eigenstates of $H$ of energies (eigenvalues) at most $\Lambda \ge 0$. 
We  assume $H_l \ge 0$.
Our results will be especially useful when $\Delta / h$ vanishes asymptotically, and $\Delta$ will specify the low-energy subspace.

The notion of effective operators will be useful in our analysis. Given a Hermitian operator $X$ and $\Delta' \ge \Delta$, the corresponding effective operator is $\bar X = \Pi_{\le \Delta'} X \Pi_{\le \Delta'}$, which is also Hermitian.
We also define the unitaries $\bar U(s) = e^{-is \bar H}$ and $\bar W_p(s)$  by replacing the $H_l$'s by $\bar H_l$'s in $W_p(s)$. 
Note that $\bar h =\max_l \| \bar H_l \| \le \Delta'$ and $U(t) \ket \psi = \bar U(t) \ket \psi$. 
Then, using the known error bound for product formulas, we obtain $\| (U(s) - \bar W_p(s) ) \ket \psi \| = \cO((\Delta' s)^{p+1})$. This error bound is a significant improvement over the general case if $\Delta' \ll h$, which may occur when $\Delta \ll h$. 
However, product approximations of $U(t)$ require that each term is an exponential of some $H_l$, which is not the case in $\bar W_p(s)$. 
We will address this issue and show that the improved error bound is indeed attained by $W_p(s)$ for  a suitable $\Delta'$.

\subsection {Local Hamiltonians}
We are interested in simulating the time evolution of a local $N$-spin system on a lattice. 
Each local interaction term in $H$ is of  strength bounded by $J$ and involves, at most, $k$ spins. 
We do not assume that these interactions are only within neighboring spins but define the degree $d$ as the maximum number of local interaction terms that involve any spin.
Next, we write $H=\sum_{l=1}^L H_l$, where each $H_l$ is a sum of $M$ local, commuting  terms~\cite{Bro41} and $LM \le dN$.
Each $e^{-i sH_l}$ in a product formula can be decomposed as products of $M$ evolutions under the local, commuting  terms with no error.

These local Hamiltonians appear as important condensed matter systems, including gapped and critical spin chains, topologically ordered systems, and models with long-range interactions~\cite{LSM61,AKLT04,Kit03,LMG65}.
For example, for a spin chain with nearest neighbour interactions, $L=2$ and each $H_l$ may refer to interaction terms associated with even and odd bonds, respectively. In this case, $h=\cO(N)$.
We will present our results for the case $k=\cO(1)$ and $d=\cO(1)$ in the main text, which further imply $L=\cO(1)$~\cite{Bro41}. Nevertheless, explicit dependencies of our results in $k$, $d$, $L$, and other parameters that specify $H$ can be found in Supplementary Note 4.

Table~II summarizes the relevant parameters for the simulation of local Hamiltonians with product formulas. 

\begin{table}
\begin{tabular}{ |p{1.43cm}||p{4.5cm}| }
\hline
Symbol & Meaning \\
\hline
$J$  & Hamiltonian term strength \\
\hline
$\Delta$  & Low-energy parameter\\
\hline
$\Delta'$ ($\geq \Delta$)  & Effective low-energy norm\\
\hline
$N$  & Number of spins\\
\hline
$t$  & Total evolution time\\
\hline
$r$  & Number of Trotter steps\\
\hline
$s$  & $=t/r$, unit Trotter time\\
\hline
$\varepsilon$  & Total simulation error\\
\hline
\end{tabular}
\caption{Notation: The parameters of the Hamiltonian (local and constant-degree) and product formula simulation.}
\label{fig:Notation}
\end{table}

\subsection{Main result}

\begin{theorem}
\label{thm:mainthm}
Let $H=\sum_{l=1}^L H_l$ be a $k$-local Hamiltonian as above, $H_l \ge 0$, $\Delta \ge 0$, $0 \le J |s| \le 1$, and $W_p(s)$ a $p$-th order product formula as in Eq.~\eqref{eq:productformula}. Then,
\begin{align}
\label{eq:maineq}
    \| (U(s) - W_p(s)) \Pi_{\le \Delta} \| = \cO((\Delta' s)^{p+1}) \;,
\end{align}
where $\Delta' =\Delta + \beta_1 J \log( \beta_2/(J|s|)) + \beta_3 J^2 N |s|$ and the $\beta_i$'s are positive constants,
$\beta_2 \ge 1$.
\end{theorem}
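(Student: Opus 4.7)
The plan is to pivot through the effective product formula $\bar W_p(s)$, built from $\bar H_l=\Pi_{\leq\Delta'} H_l \Pi_{\leq\Delta'}$, and split via the triangle inequality
\[
\|(U(s)-W_p(s))\Pi_{\leq\Delta}\|\leq\|(U(s)-\bar W_p(s))\Pi_{\leq\Delta}\|+\|(W_p(s)-\bar W_p(s))\Pi_{\leq\Delta}\|.
\]
For the first term, I would observe that $H$ preserves the range of $\Pi_{\leq\Delta}$ and that $\Pi_{\leq\Delta}\leq\Pi_{\leq\Delta'}$, so $\bar H\Pi_{\leq\Delta}=H\Pi_{\leq\Delta}$ and hence $U(s)\Pi_{\leq\Delta}=\bar U(s)\Pi_{\leq\Delta}$. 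Applying the standard $p$-th order product-formula bound with $\max_l\|\bar H_l\|\leq\Delta'$ then yields $\|(\bar U(s)-\bar W_p(s))\Pi_{\leq\Delta}\|=\cO((\Delta'|s|)^{p+1})$ directly.

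The second term is where the real work lies. The key new ingredient is a leakage lemma of the form
\[
\|\Pi_{>\Delta+x}\,e^{-is H_l}\Pi_{\leq\Delta}\|\leq c_1\exp\!\bigl(-(x-c_2 J^2 N|s|)/(c_3 J)\bigr),
\]
valid once $x$ exceeds the ballistic offset $c_2 J^2 N|s|$. The offset tracks the Lieb--Robinson-type estimate $\|[H,H_l]\|=\cO(J^2 N)$, while the exponential rate $1/J$ reflects that each local term in $H_l$ can shift the energy by at most $\cO(J)$ per application. I would prove this lemma by Taylor-expanding $e^{-isH_l}$ and bounding $\|\Pi_{>\Delta+x}H_l^n\Pi_{\leq\Delta}\|$ combinatorially, exploiting that $H_l$ is a sum of $\cO(N)$ commuting $k$-local terms so that only $\cO(n)$-step moves in energy space are possible, together with a Chernoff-style tail estimate.

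Armed with the leakage lemma, I would compare $W_p(s)$ to $\bar W_p(s)$ by telescoping. Writing $\ket{\psi_j}$, $\ket{\bar\psi_j}$ for the intermediate states after $j$ factors of the two formulas, respectively, applied to $\Pi_{\leq\Delta}\ket\psi$, the Duhamel identity
\[
e^{-is_j H_{l_j}}-e^{-is_j\bar H_{l_j}}=-i\!\int_0^{s_j}e^{-i(s_j-r)H_{l_j}}(H_{l_j}-\bar H_{l_j})e^{-ir\bar H_{l_j}}\,dr
\]
reduces each per-step difference to an integral whose integrand, when acting on a state approximately supported in $\Pi_{\leq\Delta'}$, factors through $\Pi_{>\Delta'}H_{l_j}\Pi_{\leq\Delta'}$ and is therefore controlled by the leakage lemma applied to $\ket{\psi_{j-1}}$. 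Summing the $q=\cO(L)=\cO(1)$ per-step errors and choosing $\Delta'-\Delta=\beta_1 J\log(\beta_2/(J|s|))+\beta_3 J^2 N|s|$ so that the exponential factor in the lemma drops below $\cO((J|s|)^{p+1})$ yields $\|(W_p(s)-\bar W_p(s))\Pi_{\leq\Delta}\|=\cO((\Delta'|s|)^{p+1})$, completing the bound.

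The most delicate point, and the main obstacle, is calibrating the leakage lemma so that a single threshold $\Delta'$ works uniformly across all $q$ factors, despite the energy of $\ket{\psi_{j-1}}$ drifting slightly between steps. I would handle this by applying the leakage bound inductively with a monotonically increasing effective threshold and verifying that the cumulative drift is absorbed by the $\beta_3 J^2 N|s|$ offset, which is precisely the structural shape written in the statement of the theorem.
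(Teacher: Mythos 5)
Your proposal follows essentially the same route as the paper: pivot through the effective product formula $\bar W_p(s)$, bound $\|(U(s)-\bar W_p(s))\Pi_{\le\Delta}\|$ by the standard $p$-th order estimate with $\max_l\|\bar H_l\|\le\Delta'$, control $\|(W_p(s)-\bar W_p(s))\Pi_{\le\Delta}\|$ by a leakage lemma with exponential decay in the energy gap plus a ballistic offset $\propto J^2N|s|$, telescope with a monotonically increasing sequence of thresholds $\Delta=\Lambda_0\le\Lambda_1\le\cdots\le\Lambda_q\le\Delta'$, and finally choose $\Delta'$ so that the leakage error is $\cO((J|s|)^{p+1})\le\cO((\Delta'|s|)^{p+1})$. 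This is exactly the paper's architecture (its Lemmas~\ref{lem:TimeLeak}--\ref{lem:EffTimeLeak}, Corollaries~\ref{cor:projectapprox}--\ref{cor:projectapprox4}, and the choice of $\Delta'$ in Eq.~\eqref{eq:D'def}), and your identification of the threshold-drift issue as the delicate point, resolved by increasing thresholds, is the right one.

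The one place where your argument does not yet hold up is the proof of the leakage lemma itself, which is the technical heart of the result. The intuition you offer --- that ``each local term in $H_l$ can shift the energy by at most $\cO(J)$ per application'' --- is false as stated: a bounded $k$-local term can have nonzero matrix elements between eigenstates of $H$ of arbitrarily different energies, so no hard cutoff on the energy shift exists. What is true, and what must be proved, is that these matrix elements decay exponentially in the energy difference: $\|\Pi_{>\Lambda'}A\Pi_{\le\Lambda}\|\le\|A\|\,e^{-\lambda(\Lambda'-\Lambda-2R)}$ with $\lambda=1/(2gk)$ and $R$ the total strength of the terms of $H$ not commuting with $A$. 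The paper does not reprove this; it imports it as Theorem~2.1 of Arad--Kuwahara--Landau~\cite{AKL16}, applies it to each monomial $(H_l)^n$ (which acts on at most $kn$ spins, giving $R\le Jdkn$ and hence the factor $(eMJ)^n e^{-\lambda(\Lambda'-\Lambda)}$), and sums the Taylor series. Your ``combinatorial plus Chernoff-style tail'' sketch gestures at the right mechanism but supplies no actual bound; without the AKL-type estimate (or an equivalent moment-generating-function argument on $e^{\lambda H}H_l^n e^{-\lambda H}$), the per-monomial decay, and hence the whole second half of the proof, is unsupported. A secondary, fixable point: your Duhamel identity places the full evolution on the left and the effective one on the right, which forces you to also bound the leakage of $e^{-ir\bar H_{l_j}}$ above $\Lambda_j$ (the second estimate of Lemma~\ref{lem:EffTimeLeak}); the paper's orientation, with $e^{-i(s-s')\bar H_l}$ on the left, lets $\Pi_{\le\Delta'}$ commute through and exploits $\Pi_{\le\Delta'}(\bar H_l-H_l)\Pi_{\le\Delta'}=0$ directly.
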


The proof of Thm.~\ref{thm:mainthm}
is in Supplementary Note 3 and we provide more details about it in the next section, but the basic idea is as follows. 
There are two contributions to Eq.~\eqref{eq:maineq} in our analysis. One comes from approximating the evolution operator with a product formula that involves the effective Hamiltonians and, as long as $\Delta' \ge \Delta$, this error is $\cO((\Delta' |s|)^{p+1})$, as explained.
The other comes from replacing such a product formula by the one with the actual Hamiltonians $H_l$, i.e., $W_p(s)$. 
However, unlike $\bar H_l$, the evolution under each $H_l$ allows for leakage or transitions from the low-energy subspace to the subspace of energies higher than $\Delta'$. 
In Supplementary Information, we use a result on energy distributions in Ref.~\cite{AKL16} to show that this leakage can be bounded and decays exponentially with $\Delta'$. 
Thus, this effective norm depends on $\Delta$ and must also depend on $s$, as the support on high-energy states can increase 
as $s$ increases, resulting in the linear contribution to $\Delta'$ in Thm.~\ref{thm:mainthm}. 

The $\log (\beta_2/(J|s|))$ factor in $\Delta'$
only becomes relevant when $|s| \ll 1$. 
This term appears in our analysis due to the requirement that both contributions to Eq.~\eqref{eq:maineq} discussed above are of the same order.
Thus, as $s \rightarrow 0$, we require $\Delta ' \rightarrow \infty$ to make the error due to leakage zero, which is unnecessary and unrealistic.
This term plays a mild role when determining the final complexity of a product formula, as the goal will be to make $s$ as large as possible for a target approximation error.
It may be possible that this term disappears in a more refined analysis.

Let $r=t/s$ be the Trotter number, i.e., the number
of steps to approximate $U(t)$ as $(W_p(s))^r$.
Since $U(s)\Pi_{\le \Delta}=\Pi_{\le \Delta} U(s)\Pi_{\le \Delta}$ and if $\|(U(s) - W_p(s)) \Pi_{\le \Delta}\| \le \epsilon$, the triangle inequality implies $\|(U(t) - (W_p(s))^r) \Pi_{\le \Delta}\| \le 2 r \epsilon$.
Thus, for overall target error $\varepsilon >0$, it will suffice to satisfy
$\|(U(s) - W_p(s)) \Pi_{\le \Delta}\| = \cO(\varepsilon s/t)$.
This condition and Thm.~\ref{thm:mainthm} can be used to determine $r$ as follows.

Each term of $\Delta'$ in Thm.~\ref{thm:mainthm} can be dominant depending on $s$ and $\Delta$. First, we consider the first two terms, and determine a condition in $s$ to satisfy $((\Delta +J) |s|)^{p+1} = \cO(\varepsilon s/t)$, by omitting the $\log$ factor. 
Then, we consider another term and determine a condition in $s$ to satisfy $(J^2 N |s|^2)^{p+1} = \cO(\varepsilon s/t)$.
These two conditions alone can be satisfied
with a Trotter number
\begin{align}
\label{eq:rbound0}
    r'= \cO \left( \frac{(t (\Delta + J)  )^{1+\frac{1}{p}}} {\varepsilon^{\frac{1}{p}}}+ \frac{(tJ \sqrt{ N })^{1+\frac{1}{2p+1}}} {\varepsilon^{ \frac{1}{2p+1}}}\right) \;.
\end{align}
Last, we reconsider the second term with $\log$, and we require  $(J \log(1/(J|s|)) |s|)^{p+1} = \cO(\varepsilon s/t)$. 
As the first two conditions are satisfied with a value for $s$ that is polynomial in $N$ and $ tJ/\varepsilon$, this last condition only sets a correction to the first term in $r'$ in Eq.~\eqref{eq:rbound0} that is polylogarithmic in $ |t|J/\varepsilon$. 
Thus, the overall complexity of the product formula for local Hamiltonians is given by Eq.~\eqref{eq:rbound0}, where we need to replace $\cO$ by $\tilde \cO$ to account for the last correction. 
Note that the number of terms in each $W_p(s)$ is constant under the assumptions and $r$ is proportional to the total number of exponentials in $(W_p(s))^r$.

We give a general result on the complexity
of product formulas that provides $r$ as a function of all parameters that specify $H$ in
Thm.~2 of Supplementary Note 4.

\subsection{The condition $H_l \ge 0$}

The error bounds for product formulas used in Thm.~\ref{thm:mainthm} depend
on the norm of the effective Hamiltonians $\bar H_l$.
The assumption $H_l \ge 0$ will then assure
that $\|\bar H_l \| \le \Delta'$, which is sufficient 
to demonstrate the complexity improvements in Eq.~\eqref{eq:rbound0}.

In general, $H_l \ge 0$ can be met after a simple shifting $H_l \rightarrow H_l+a_l$, and the assumption seems irrelevant. However, this shifting could result in a value of $\Delta$ (or $\Delta'$) that scales with some parameters such as the system size $N$. In this case, the error bound in Thm.~\ref{thm:mainthm} would be comparable to that of the worst case (without the low-energy assumption) and would not provide
an advantage.

Nevertheless, for many important spin Hamiltonians, the assumption $H_l \ge 0$ is readily satisfied.
The Heisenberg model of Fig.~1 is an example, where $H_l$ is a sum of terms like $\one - X_iX_j -Y_iY_j- Z_iZ_j \ge 0$. More general (anisotropic) Heisenberg models
as well as the so-called frustration-free Hamiltonians that are ubiquitous in many-body physics also satisfy the assumption~\cite{BT09,BOO10}, where our results directly apply. For this class of models,
the ground state energy is zero. This class contains interesting low-lying states in the subspace where, e.g., $\Delta=\cO(1)$.

We provide more details on potential complexity improvements for the general case ($H_l \ngeq 0$) at the end of Supplementary Note 3.

\vspace{0.2cm}


\section{Discussions}


The best previous result for the complexity of product
formulas (Trotter number) for local Hamiltonians of constant degree is $\cO(\tau^{1+1/p} N^{1/p}/\varepsilon^{1/p})$, with $\tau = |t|J$~\cite{CST+19}.
Our result gives an improvement over this in various regimes. 
Note that, a general characteristic of our results is that they depend on $\Delta$, which is specified by the initial state.
Here we assume that $\Delta$ is a constant independent of other parameters that specify $H$. The comparison for this case is in Table~I.
For $p=1$, we obtain a strict improvement of order $N^{1/3}$ over the best-known result. For higher values of $p$, the improvement appears for larger values of $\tau /\varepsilon$ that may scale with $N$, e.g., $\tau /\varepsilon=\tilde \Omega( N^{p - 2 +1/(p+1)})$.
In Supplementary Note 5, we provide a more detailed comparison between our results and the best previous results for product formulas as a function of $\Delta$ and other parameters that specify $H$.

A more recent method for Hamiltonian simulation uses a truncated Taylor series expansion of $e^{-iHt/r} \approx U_r=\sum^{K}_{k=0} (-iHt/r)^k/k!$~\cite{BCC+15}.
Here, $r$ is the number of ``segments'', and $U(t)$ is approximated as $(U_r)^r$.
A main advantage of this method is that, unlike product formulas, its complexity in terms of $\varepsilon$ is logarithmic, a major advantage if precise computations are needed. 
The complexity of this method for the low-energy subspace of $H$ can only be mildly improved. A small $\Delta$ allows for a truncation value $K$ that is smaller than that for the general case~\cite{BCC+15}. 
Nevertheless, the complexity of this method is dominated by $r$, which depends on a certain 1-norm $\|H\|_1$ of $H$ that is independent of $\Delta$. 
Furthermore, quantum signal processing,
an approach for Hamiltonian simulation also based on
certain polynomial approximations of $U(t)$,
was recently considered for simulation in the low-energy subspace~\cite{low2017hamiltonian}. 
While the low-energy constraint may also result
in some mild (constant) improvement, the overall complexity of quantum signal processing also depends on $\|H\|_1$. For local Hamiltonians where $k,d=\cO(1)$, and for constant $\Delta$, the overall complexity of these methods is $\tilde \cO(\tau N^2)$, where we disregarded logarithmic factors in $\tau$, $N$, and $1/\varepsilon$.
Our results on product formulas provide an improvement over these methods in various regimes, e.g., when $\varepsilon$ is constant.

\vspace{0.2cm}

The obtained complexities are an improvement as long as the energy $\Delta$ of the initial state is sufficiently small.
As we discussed, the assumption $H_l \geq 0$ was used and, while our results readily apply to a large class of spin models, it may be in conflict with ensuring small values of $\Delta$ in some cases.
It will be important to understand this in more detail (see Supplementary Note 3),
which may be related to the fact that, for general Hamiltonians ($H_l \ngeq 0$), an improvement in the low-energy simulation could imply an improvement in the high-energy simulation by considering $-H$ instead. Indeed, certain spin models possess a symmetry that connects the high-energy and low-energy subspaces via a simple transformation. 
Whether such ``high-energy'' simulation improvement is possible or not remains open.
Additionally, known complexities of product formulas are polynomial in $1/\varepsilon$. This is an issue if precise computations are required as in the case of quantum field theories or QED.
Whether this complexity can be improved in terms of precision as in Refs.~\cite{BCC+14,BCC+15,LC17, low2019well} is also open.

Our work is an initial attempt to this problem. We expect to motivate further studies on improved Hamiltonian simulation methods in this setting by refining our analyses, assuming other structures such as interactions that are geometrically local, or improving other simulation approaches.


\section{Methods} 

\subsection{Leakage to high-energy states}
A key ingredient for Thm.~\ref{thm:mainthm} is a property
of local spin systems, where the  leakage
to high-energy states due to the evolution under any $H_l$ can be bounded. Let $\Pi_{>\Lambda'}$ be the projector into the subspace spanned by eigenstates of energies greater than $\Lambda'$. Then, for a state $\ket \phi$ that
satisfies $\Pi_{\le \Lambda} \ket \phi = \ket \phi$,
we consider a question on the support of $e^{-isH_l} \ket \phi$ on states with energies greater than $\Lambda'$.
This question arises naturally in Hamiltonian complexity and beyond, and Lemma~\ref{lem:TimeLeak} below may be of independent interest. 
A generalization of this lemma will allow one to address the Hamiltonian simulation problem in the low-energy subspace
beyond spin systems.

\begin{lemma}[Leakage to high energies]
\label{lem:TimeLeak}
Let $H =\sum_{l=1}^L H_l$ be a $k$-local Hamiltonian of constant degree as above, $H_l \ge 0$, and $\Lambda' \ge \Lambda \ge 0$. Then, 
$\forall \; s \in \mathbb R$ and $\forall \; l$,
\begin{align}
\label{eq:TimeLeak}
     \| \Pi_{ > \Lambda'} e^{-i s H_{l}} \Pi_{\le \Lambda} \|  \le e^{- \alpha_1 (\Lambda' - \Lambda)/J} \left( e^{\alpha_2 J |s| N} - 1 \right) \;,
\end{align}
where $\alpha_1$ and $\alpha_2$ are positive constants. 
\end{lemma}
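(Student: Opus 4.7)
The plan is to reduce Lemma~\ref{lem:TimeLeak} to a norm bound on the imaginary-time rotated operator $e^{\lambda H}H_l e^{-\lambda H}$, via an operator Markov inequality for the energy projector combined with a Taylor expansion of $e^{-isH_l}$. Since $\Pi_{>\Lambda'}\Pi_{\le\Lambda}=0$, the constant term drops out and
\[
\Pi_{>\Lambda'} e^{-isH_l}\Pi_{\le\Lambda}=\sum_{n=1}^\infty \frac{(-is)^n}{n!}\,\Pi_{>\Lambda'}H_l^n \Pi_{\le\Lambda}.
\]
It therefore suffices to establish $\|\Pi_{>\Lambda'}H_l^n \Pi_{\le\Lambda}\|\le e^{-\lambda(\Lambda'-\Lambda)}(\alpha M)^n$ for each $n\ge 1$: summing the series then yields $e^{-\lambda(\Lambda'-\Lambda)}(e^{\alpha|s|M}-1)$, exactly Eq.~\eqref{eq:TimeLeak}.

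The exponential decay factor comes from the pointwise inequality $[E>\Lambda']\le e^{2\lambda(E-\Lambda')}$, valid for any $\lambda\ge 0$ and $E\in\mathbb R$, i.e., $\Pi_{>\Lambda'}\preceq e^{2\lambda(H-\Lambda')}$. Sandwiching $H_l^n\Pi_{>\Lambda'}H_l^n$ between $\Pi_{\le\Lambda}$ and inserting $e^{-\lambda H}e^{\lambda H}$, together with $\|\Pi_{>\Lambda'}e^{-\lambda H}\|\le e^{-\lambda\Lambda'}$ and $\|e^{\lambda H}\Pi_{\le\Lambda}\|\le e^{\lambda\Lambda}$, produce
\[
\|\Pi_{>\Lambda'}H_l^n\Pi_{\le\Lambda}\|\le e^{-\lambda(\Lambda'-\Lambda)}\,\|e^{\lambda H}H_l^n e^{-\lambda H}\| \le e^{-\lambda(\Lambda'-\Lambda)}\,\|e^{\lambda H}H_l e^{-\lambda H}\|^n,
\]
using the multiplicative identity $e^{\lambda H}H_l^n e^{-\lambda H}=(e^{\lambda H}H_l e^{-\lambda H})^n$.

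The main technical step is to prove $\|e^{\lambda H}H_l e^{-\lambda H}\|\le eJM$ at the stated value $\lambda=1/(2Jdk)$. Decomposing $H_l=\sum_{i=1}^M h_{l,i}$ and applying the triangle inequality reduces this to $\|e^{\lambda H}h_{l,i}e^{-\lambda H}\|\le eJ$ for a single $k$-local term of norm at most $J$. I would expand in nested commutators, $e^{\lambda H}h_{l,i}e^{-\lambda H}=\sum_{n\ge 0}\frac{\lambda^n}{n!}\,\mathrm{ad}_H^n(h_{l,i})$, and use locality: each $\mathrm{ad}_H$ only picks up terms of $H$ overlapping the current support, the support grows by at most $k$ sites per commutator, and each site is touched by at most $d$ Hamiltonian terms of norm $J$. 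Feeding the resulting bound back into the Taylor series completes the argument.

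The hardest part is the commutator estimate. The crudest cluster bound gives $\|\mathrm{ad}_H^n(h_{l,i})\|\lesssim n!(Jdk)^n J$, whose factorial makes the Taylor series in $\lambda$ divergent for every $\lambda>0$. Eliminating the factorial requires tracking actual connected cluster shapes and their multiplicities rather than worst-case overlap counts; this is precisely the role of the energy-distribution result of~\cite{AKL16} signaled in the main text, whose machinery yields analyticity of $e^{\lambda H}h_{l,i}e^{-\lambda H}$ in $\lambda$ on a disc of radius $\Theta(1/(Jdk))$ with prefactor $O(J)$, matching the claimed constants $\lambda=1/(2Jdk)$ and $\alpha=eJ$.
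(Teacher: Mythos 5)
Your skeleton is the same as the paper's: expand $e^{-isH_l}$ in a Taylor series, note the $n=0$ term is killed by $\Pi_{>\Lambda'}\Pi_{\le\Lambda}=0$, prove $\|\Pi_{>\Lambda'}H_l^n\Pi_{\le\Lambda}\|\le e^{-\lambda(\Lambda'-\Lambda)}(eJM)^n$, and resum. Where you diverge is in how that per-power bound is obtained. The paper applies Theorem 2.1 of Ref.~\cite{AKL16}, $\|\Pi_{>\Lambda'}A\Pi_{\le\Lambda}\|\le\|A\|e^{-\lambda(\Lambda'-\Lambda-2R)}$, as a black box directly to $A=(H_l)^n$, decomposed into at most $M^n$ terms of strength $J^n$ supported on at most $kn$ spins, so that $R\le Jdkn$ and $e^{2\lambda R}\le e^n$ at $\lambda=1/(2Jdk)$. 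You instead insert $e^{\mp\lambda H}$ around $H_l^n$ and use $e^{\lambda H}H_l^ne^{-\lambda H}=(e^{\lambda H}H_le^{-\lambda H})^n$ to reduce everything to the single-operator bound $\|e^{\lambda H}h_{l,i}e^{-\lambda H}\|\le eJ$. That reduction is clean and correct (the stray factor of $2$ in your pointwise inequality is harmless since your actual insertion uses $e^{\pm\lambda H}$), and it is arguably tidier than counting the $M^n$ enlarged-support terms of $(H_l)^n$.

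The gap is that the one estimate everything now rests on, $\|e^{\lambda H}h e^{-\lambda H}\|\le eJ$ at $\lambda=1/(2Jdk)$, is never established. Your own commutator expansion does not close: with at most $jkd$ choices of overlapping term at step $j$ and a factor $2J$ per commutator one gets $\|\mathrm{ad}_H^n(h)\|\le n!(2Jdk)^nJ$, so the Taylor series in $\lambda$ is \emph{not} divergent for every $\lambda>0$ as you claim --- the $n!$ cancels against $1/n!$ and the geometric series converges for all $\lambda<1/(2Jdk)$, giving e.g.\ $\|e^{\lambda H}he^{-\lambda H}\|\le 2J$ at $\lambda=1/(4Jdk)$. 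It fails only at the endpoint $\lambda=1/(2Jdk)$, which is exactly the value the lemma asserts. So either you settle for a slightly smaller $\lambda$ (an elementary, self-contained proof of a weaker version of the lemma, which would still suffice for the qualitative conclusions of the paper), or you must actually invoke the sharper bound $\|e^{\lambda H}Ae^{-\lambda H}\|\le\|A\|e^{2\lambda R}$ that underlies Theorem 2.1 of Ref.~\cite{AKL16}; gesturing at its ``machinery'' is not a proof, and the paper avoids this by citing the theorem itself in a directly applicable form. As written, the central step of your argument is asserted rather than proved.
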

The proof is in Supplementary Note 1. 
It follows from a result in Ref.~\cite{AKL16} on the action of a local interaction term on a quantum state of low-energy, in combination with a series expansion of $e^{-is H_l}$.
While the local interaction term could generate support on arbitrarily high-energy states, that support is suppressed by a factor that decays exponentially in $\Lambda'-\Lambda$. 

Another key ingredient for proving Thm.~\ref{thm:mainthm} is the ability to replace evolutions under the $H_l$'s in a product formula by those under their effective low-energy versions (and vice versa) with bounded error. 
This is addressed by Lemma~\ref{lem:EffTimeLeak} below, which is a consequence of Lemma~\ref{lem:TimeLeak}. 
The proof is in  Supplementary Note 2, where we also provide tighter bounds that depend on $\Delta'$.

\begin{lemma}
\label{lem:EffTimeLeak}
Let $H=\sum_{l=1}^L H_l$ be a $k$-local Hamiltonian of constant degree as above, $H_l \ge 0$, and $\Delta' \ge \Lambda' \ge \Lambda \ge 0$. Then, $\forall \; s \in \mathbb R$ and $\forall \; l$,
\begin{align}
\nonumber
     \| \Pi_{\le \Lambda'} (e^{-i s \bar H_{l}} &-  e^{-i s H_{l}})  \Pi_{\le \Lambda} \| \\
    & \le e^{-\alpha_1 (\Lambda' - \Lambda)/J} (e^{\alpha_2 J |s| N} - 1)
\end{align}
and
\begin{align}
   \| \Pi_{>\Lambda'} e^{-is \bar H_{l}} \Pi_{\le \Lambda } \| \le 3 e^{-\alpha_1 (\Lambda' - \Lambda)/J} (e^{\alpha_2 J |s| N} - 1)  \;,
\end{align}
where $\alpha_1$ and $\alpha_2$ are positive constants.
\end{lemma}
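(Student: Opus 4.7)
The plan is to derive both bounds from Lemma~\ref{lem:TimeLeak} via Duhamel's formula, exploiting a key structural fact: since $\bar H_l = \Pi_{\le\Delta'} H_l \Pi_{\le\Delta'}$ annihilates the high-energy complement $\Pi_{>\Delta'}$, the unitary $e^{-it\bar H_l}$ acts as the identity there, i.e., $e^{-it\bar H_l}\Pi_{>\Delta'} = \Pi_{>\Delta'}$, and equivalently $e^{-it\bar H_l}$ is block diagonal with respect to $\Pi_{\le\Delta'} \oplus \Pi_{>\Delta'}$. This block structure is what will allow most of the cross terms to be discarded.

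For the first inequality, I would write
\[
e^{-isH_l}-e^{-is\bar H_l} = -i\int_0^s e^{-i(s-\tau)\bar H_l}(H_l-\bar H_l)\,e^{-i\tau H_l}\,d\tau,
\]
and, with $\Pi=\Pi_{\le\Delta'}$ and $\bar\Pi=\Pi_{>\Delta'}$, decompose $H_l-\bar H_l = \Pi H_l\bar\Pi + \bar\Pi H_l\Pi + \bar\Pi H_l\bar\Pi$. Since $e^{-i(s-\tau)\bar H_l}\bar\Pi=\bar\Pi$ and $\Lambda'\le\Delta'$ forces $\Pi_{\le\Lambda'}\bar\Pi=0$, the two terms with $\bar\Pi$ sitting to the left of $H_l$ are killed after multiplication by $\Pi_{\le\Lambda'}$. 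Only $\Pi H_l\bar\Pi$ survives, leaving
\[
\Pi_{\le\Lambda'}(e^{-isH_l}-e^{-is\bar H_l})\Pi_{\le\Lambda} = -i\int_0^s \Pi_{\le\Lambda'}e^{-i(s-\tau)\bar H_l}\Pi H_l\bigl(\bar\Pi\, e^{-i\tau H_l}\Pi_{\le\Lambda}\bigr)\,d\tau.
\]
The prefactor has norm at most $\|H_l\|\le MJ$, while the bracketed factor is controlled by Lemma~\ref{lem:TimeLeak} with cutoffs $\Delta'$ and $\Lambda$, giving $\|\bar\Pi\, e^{-i\tau H_l}\Pi_{\le\Lambda}\|\le e^{-\lambda(\Delta'-\Lambda)}(e^{\alpha|\tau|M}-1)$. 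Performing the $\tau$-integral and using $\alpha=eJ$ produces a bound $\tfrac{1}{e}e^{-\lambda(\Delta'-\Lambda)}(e^{\alpha|s|M}-1)$, which is at most $e^{-\lambda(\Lambda'-\Lambda)}(e^{\alpha|s|M}-1)$ because $\Delta'\ge\Lambda'$.

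For the second inequality I would use that $e^{-is\bar H_l}\Pi_{\le\Lambda}$ lies entirely in the range of $\Pi$, so $\Pi_{>\Lambda'}e^{-is\bar H_l}\Pi_{\le\Lambda} = \Pi_{(\Lambda',\Delta']}e^{-is\bar H_l}\Pi_{\le\Lambda}$. Adding and subtracting $\Pi_{(\Lambda',\Delta']}e^{-isH_l}\Pi_{\le\Lambda}$ and invoking the triangle inequality, the first piece is dominated by $\|\Pi_{>\Lambda'}e^{-isH_l}\Pi_{\le\Lambda}\|$, at most $e^{-\lambda(\Lambda'-\Lambda)}(e^{\alpha|s|M}-1)$ by Lemma~\ref{lem:TimeLeak}. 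The second piece, after writing $\Pi_{(\Lambda',\Delta']} = \Pi_{\le\Delta'}-\Pi_{\le\Lambda'}$, is estimated by the first inequality applied once with upper cutoff $\Delta'$ and once with $\Lambda'$, contributing $2\,e^{-\lambda(\Lambda'-\Lambda)}(e^{\alpha|s|M}-1)$. Summing yields the advertised factor of $3$.

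The main obstacle is choosing the correct direction of Duhamel's identity. The alternative arrangement $-i\int_0^s e^{-i(s-\tau)H_l}(\bar H_l-H_l)e^{-i\tau\bar H_l}d\tau$ places the $\bar H_l$-evolution on the right rather than the left, so the identity-on-$\bar\Pi$ property of $e^{-it\bar H_l}$ cannot be used to eliminate any of the cross terms after projection by $\Pi_{\le\Lambda'}$. One is then forced to estimate $\|\Pi_{\le\Lambda'}e^{-i(s-\tau)H_l}\bar\Pi\|$, which via the adjoint of Lemma~\ref{lem:TimeLeak} only yields decay in $\Delta'-\Lambda'$, and the $\|H_l\|$ prefactor is not compensated by enough exponential suppression to recover the clean $e^{-\lambda(\Lambda'-\Lambda)}$ factor. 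Recognizing which ordering makes this cancellation structure transparent is the conceptual heart of the argument.
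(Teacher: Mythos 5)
Your proposal is correct and follows essentially the same route as the paper's proof: Duhamel's identity with the $\bar H_l$-evolution on the left, elimination of all but the $\Pi_{\le \Delta'} H_l \Pi_{>\Delta'}$ cross term using the block structure of $e^{-is\bar H_l}$, control of the surviving leakage factor via Lemma~\ref{lem:TimeLeak}, and the add-and-subtract triangle-inequality argument (splitting $\Pi_{\le\Delta'}-\Pi_{\le\Lambda'}$) that yields the factor of $3$ in the second bound. The only cosmetic difference is that you retain the slightly tighter prefactor $J/\alpha = 1/e$ where the paper bounds $\|H_l\|\le \alpha M$ to cancel it exactly; both give the stated inequality.
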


\subsection{Relevance to the main result}

The consequences of these lemmas for Hamiltonian simulation
are many-fold and we only sketch
those that are relevant for
Thm.~\ref{thm:mainthm}.
Consider any product formula of the form
$W=\prod_{j=1}^q e^{-i s_j H_{l_j}}$.
Then, there exists a sequence of energies $\Lambda_q \ge \ldots \ge \Lambda_0=\Delta$ such that the action of $W$ on the initial low-energy state $\ket \psi$ can be well approximated by that of $ W^\bl=\prod_{j=1}^q \Pi_{\le \Lambda_j}e^{-i s_j H_{l_j}}$ on the same state.  
Furthermore, each $\Pi_{\le \Lambda_j}e^{-i s_j H_{l_j}}$ in $W^\bl$ can be replaced by $\Pi_{\le \Lambda_j}e^{-i s_j \bar H_{l_j}}$ and later by $e^{-i s_j \bar H_{l_j}}$ within the same error order, as long as $\Lambda_q \le \Delta'$.

In particular, for sufficiently small evolution times $s_j$ and $\Delta \ll h$, the resulting effective norm
satisfies $\Delta' \ll h$ for local Hamiltonians.
This is formalized by several corollaries in Supplementary Note 3.
Starting from $W$, we can construct
the product formula $\bar W = \prod_{j=1}^q e^{-i s_j \bar H_{l_j}}$. Lemmas~\ref{lem:TimeLeak} and~\ref{lem:EffTimeLeak} imply that both product formulas
produce approximately the same state when acting on $\ket \psi$, for a suitable choice of $\Delta'$
as in Thm.~\ref{thm:mainthm}. If $\bar W$ is a product
formula approximation of $\bar U(s)=e^{-is \bar H}$, it follows that $U(s) \ket \psi = \bar{U}(s) \ket \psi \approx \bar W \ket \psi \approx W \ket{\psi}$. 

\vspace{0.2 cm}


\section{Data Availability statement}
All relevant data used for Fig.~1 are available from the authors.

\section{Code Availability statement}
The code for the simulation results in Fig.~1 is available from the authors.
\section{Acknowledgements}
We acknowledge support from the LDRD program at LANL, the U.S. Department of Energy, Office of Science, High Energy Physics and Office of Advanced Scientific Computing Research, under the QuantISED grant KA2401032 and Accelerated Research in Quantum Computing (ARQC). Los Alamos National Laboratory is managed by Triad National Security, LLC, for the National Nuclear Security Administration of the U.S. Department of Energy under Contract No. 89233218CNA000001. 
This work is also supported by the Quantum Science Center (QSC), a National Quantum Information Science Research Center of the U.S. Department of Energy (DOE).\\

\section{Author Contributions}

All authors contributed equally to this work.

\section{Competing Interests}
The authors declare no competing interests.




\section{Figure legends}

Figure 1: Worst case vs. low-energy Trotter errors: Errors induced by product formulas for a 2x6
Heisenberg spin-1/2 model. The Hamiltonian is $H= -\sum_{\langle i, j \rangle} X_i X_j + Y_i Y_j + Z_i Z_j$, where $X_i$, $Y_i$, and $Z_i$ are the Pauli operators for the $i$th spin, and $H=H_1+H_2$, where $H_1$ and $H_2$ are the interaction terms represented by blue and red bonds, respectively.
(a) The evolution operator for time $s$, $U(s)=e^{-isH}$, is approximated by the first order product formula $W_1(s)=e^{-i sH_1}e^{-i sH_2}$. The plot shows the largest approximation errors when acting on various low-energy subspaces associated with increasing energies, labeled by $n=1,50,150,200$, and in the worst case. (b) Similar results for when the evolution operator $U(s)=e^{-isH}$ is approximated by the second order product formula $W_2(s)=e^{-i sH_1/2}e^{-i sH_2}e^{-i sH_1/2}$.\\

Table I: Improvements of low-energy simulation: Comparison between the best-known complexity~\cite{CST+19} and the complexity of low-energy simulation for $p$-th order product formulas. Results show the Trotter number for constant $\Delta$ and local Hamiltonians on $N$ spins with constant degree and strength bounded by $J$, and $\tau=|t|J$. $\varepsilon$ is the approximation error. The $\tilde \cO$ notation hides polylogarithmic factors in $\tau/\varepsilon$.\\

Table II: Notation: The parameters of the Hamiltonian (local and constant-degree) and product formula simulation.

\bibliographystyle{unsrt}
\bibliography{Subspace-Sim}


\onecolumngrid

\newpage



\begin{center}
	\textbf{\large Supplementary Information}
\end{center}


In the following, we let $H$
be a $k$-local Hamiltonian of $N$ spins, where each interaction term involves at most $k>0$ spins. 
We will write $H=\sum_{l=1}^L H_l$, where each $H_l$ is a sum of at most $M$ $k$-local and commuting interaction terms.
The $H_l$'s may be obtained 
via graph coloring~\cite{Bro41}, where a graph can be constructed with vertices that are labeled according to the subset of spins in each interaction term and the edges connect vertices associated with the same spins, but more efficient constructions may be possible. Indeed, in many interesting examples such as spins on the square lattice, $H$ is already given in the desired form.
The degree of $H$, i.e., the highest number of interaction terms that act non-trivially on any spin, is $d>0$. 
The strength of each local interaction term is bounded by $J>0$, hence $\|H_l\| \le JM$ and $\|H\| \le JML$. 
Throughout this paper, $\|.\|$ refers to the spectral norm (largest eigenvalue for positive semidefinite and Hermitian operators).
The total number of local terms in $H$ is then upper bounded by $ML$ and $dN$. We will assume that $H$ contains exactly $ML$ terms and thus $N \le ML \le dN$ with no loss of generality (e.g., we can add or subtract trivial terms to $H_l$ and each spin appears, at least, in one term). Furthermore, following the coloring procedure described above, we may assume $L \le kd +1$~\cite{Bro41}.

For $\Lambda' \ge \Lambda \ge 0$, the operators $\Pi_{\le \Lambda}$ and $\Pi_{>\Lambda'}$
are the projectors into the subspaces spanned by the eigenstates of $H$ with energies (eigenvalues) at most $\Lambda$ and larger than $\Lambda'$, respectively.
For a given $\Delta '\ge 0$,  the $\Delta'$-effective (or simply effective) Hamiltonians are then defined as $\bar H = \Pi_{\le \Delta '} H \Pi_{\le \Delta'}$,  $\bar H_{l} = \Pi_{\le \Delta '} H_{l} \Pi_{\le \Delta'}$, and $\bar H=\sum_{l=1}^L \bar H_{l}$.
We assume $H_{l} \ge 0$, and thus $\bar H_l \ge 0$, $\| \bar H_{l} \| \le \| \bar H \| = \Delta'$. 

\vspace{0.1cm}

\section*{Supplementary Notes}

\subsection*{Supplementary Note 1: Proof of Lemma~1}
We employ Theorem 2.1 in Ref.~\cite{AKL16}
that, for an operator $A$, states
\begin{align}
\|\Pi_{>\Lambda'} A  \Pi_{\leq \Lambda}\| \leq \|A\| \cdot e^{-\lambda (\Lambda' - \Lambda - 2R)} \;.
\end{align}
Here, $\lambda= 1/(2gk)$, where $g$ is an upper bound on the sum of the strengths of the interactions associated with any spin. 
In our case, we take $g=dJ$ and $\lambda=1/(2Jdk)$.
If $E_A$ is the subset of local interaction terms in $H$ that do not commute with $A$, $R$ is the sum of the strengths of the terms in $E_A$. 
For any $H_l$, we note that $(H_l)^n$ is a sum of, at most, $M^n$ terms, each of strength bounded by $J^n$ and containing, at most, $kn$ spins. 
For each such term, $R \le Jdkn$, and we obtain
\begin{align}
\| \Pi_{ > \Lambda'} (H_l)^n \Pi_{\le \Lambda} \| & \le (M J)^n  e^{-\frac{1}{2Jdk} (\Lambda' - \Lambda -2 Jdkn)} \\
\label{eq:HLenergyaction}
\; & = (e M J)^n  e^{-\frac{1}{2Jdk} (\Lambda' - \Lambda)} \;.
\end{align}

We now consider the Taylor series expansion of the exponential,
\begin{align}
e^{-i s H_l} = \sum_{n=0}^\infty \frac {(-is H_l)^n}{n!} \;.
\end{align}
The triangle inequality and Eq.~\eqref{eq:HLenergyaction} imply
\begin{align}
\label{eq:lemma3Eq1}
\| \Pi_{ > \Lambda'} e^{-i s H_l} \Pi_{\le \Lambda} \|  & \le \sum_{n=1}^\infty \frac {|s|^n \| \Pi_{ > \Lambda'} (H_l)^n \Pi_{\le \Lambda} \|} {n!} \\
& \le e^{-\frac{1}{2Jdk}(\Lambda' - \Lambda)} \sum_{n=1}^\infty \frac {(|s| e M J)^n }{n!}\\
& = e^{-\frac{1}{2Jdk} (\Lambda' - \Lambda)} \left( e^{|s| eM J } - 1 \right)
\\
\label{eq:lemma3Eq2}
& = e^{-\lambda (\Lambda' - \Lambda)} \left( e^{\alpha |s| M} - 1 \right) \;,
\end{align}
where $\alpha = e J$.

In particular, when $k$ and $d$ are constants,
we have $\alpha M \le e J d N$, and Eq.~\eqref{eq:lemma3Eq2} implies
\begin{align}
    \| \Pi_{ > \Lambda'} e^{-i s H_l} \Pi_{\le \Lambda} \|  & \le
    e^{-\alpha_1 (\Lambda' - \Lambda)/J} \left( e^{\alpha_2 J |s| N} - 1 \right) \;,
\end{align}
where $\alpha_1=1/(2dk)$ and $\alpha_2=e d$ are also positive constants.

\qed

\subsection*{Supplementary Note 2: Proof of Lemma~2}


To prove the first result, we will use the identity
\begin{align}
e^{- i s \bar H_{l}} -  e^{- i s H_{l}} = -i \int_0^s ds' e^{- i (s-s') \bar H_{l}} (\bar H_l - H_l) e^{-i s' H_l } \;.
\end{align}
We note that, from the assumptions, $\Pi_{\le \Lambda'} = \Pi_{\le \Lambda'} \Pi_{\le \Delta'}= \Pi_{\le \Delta'}\Pi_{\le \Lambda'}$
and $[\Pi_{\le \Delta'},e^{-i s \bar H_l}]=0$ for all $s$. Then, we can 
express $\Pi_{\le \Lambda'} (e^{-i s \bar H_l}- e^{-i s H_l}) \Pi_{\le \Lambda}$ as
\begin{align}
\nonumber
-i \Pi_{\le \Lambda'} \left ( \int_0^s ds' e^{- i (s-s') \bar H_l}\Pi_{\le \Delta'} (\bar H_l - H_l)   (\Pi_{\le \Delta'}+ \Pi_{>\Delta'})e^{-i s' H_l } \right) \Pi_{\le \Lambda} \;.
\end{align}
We can simplify this expression since $\Pi_{\le \Delta'} (\bar H_l - H_l) \Pi_{\le \Delta'}=0$.
Observing that $\|\Pi_{\le \Lambda'}\|=1$, $\| e^{- i (s-s') \bar H_l}\|=1$, and using standard properties of the spectral norm, we obtain
\begin{align}
\| \Pi_{\le \Lambda'} (e^{-i s \bar H_l}- e^{-i s H_l}) \Pi_{\le \Lambda} \| 
& \le \int_0^{|s|} ds' \; \| \Pi_{\le \Delta'} (\bar H_l - H_l) \Pi_{> \Delta'}\| \|  \Pi_{> \Delta'} e^{-is'H_l} \Pi_{\le \Lambda} \| \\
\label{eq:effTL1}
& \le  \| \Pi_{\le \Delta'} (\bar H_l - H_l) \Pi_{> \Delta'}\| e^{-\lambda(\Delta'-\Lambda)} \int_0^{|s|} ds' \; (e^{\alpha s' M} - 1) \\
\label{eq:effTL2}
& =  \|\Pi_{\le \Delta'} (\bar H_l - H_l) \Pi_{> \Delta'}\| \frac{ e^{-\lambda (\Delta'-\Lambda)} (e^{\alpha |s| M} - 1 - \alpha M |s|)}{\alpha M} \;,
\end{align}
where $\lambda=1/(2Jdk)$, $\alpha=eJ$, and Eq.~\eqref{eq:effTL1} follows from Lemma~1.
Note that
\begin{align}
\|\Pi_{\le \Delta'} (\bar H_l - H_l) \Pi_{> \Delta'}\| & = \| \Pi_{\le \Delta'} H_l \Pi_{> \Delta'}\| \\
& \le \| H_l \| \\
& \le JM \\
& \le \alpha M \;.
\end{align}
We obtain
\begin{align}
\label{eq:lemma3.1}
\| \Pi_{\le \Lambda'} (e^{-i s \bar H_l}- e^{-i s H_l}) \Pi_{\le \Lambda} \|  & \le e^{-\lambda (\Delta' - \Lambda)} (e^{\alpha |s| M} - 1 - \alpha M |s|) \;.
\end{align}
To simplify our analysis, we will use a looser upper bound in the statement of the Lemma~2, which follows directly from Eq.~\eqref{eq:lemma3.1} and $\Delta' \ge \Lambda'$:
\begin{align}
\label{eq:lemma3.2}
\| \Pi_{\le \Lambda'} (e^{-i s \bar H_l}- e^{-i s H_l}) \Pi_{\le \Lambda} \|  & \le e^{-\lambda (\Lambda' - \Lambda)} (e^{\alpha |s| M} - 1) \;.
\end{align}

In particular, when $k$ and $d$ are constants,
we have $\alpha M \le e J d N$, and Eq.~\eqref{eq:lemma3.2} implies
\begin{align}
    \| \Pi_{\le \Lambda'} (e^{-i s \bar H_l}- e^{-i s H_l}) \Pi_{\le \Lambda} \| & \le e^{-\alpha_1 (\Lambda' - \Lambda)/J} (e^{\alpha_2 J |s| N} - 1) \;,
\end{align}
where $\alpha_1=1/(2dk)$ and $\alpha_2=e d$ are also constants.

To prove the second result, we use Lemma~1 together with Eq.~\eqref{eq:lemma3.2} and standard properties of the spectral norm, and obtain
\begin{align}
\|\Pi_{>\Lambda'} e^{-is \bar H_l} \Pi_{\le \Lambda}\|& = \|\Pi_{\le \Delta '}\Pi_{>\Lambda'} e^{-is \bar H_l} \Pi_{\le \Lambda} \| \\
& = \| \Pi_{\le \Delta '}\Pi_{>\Lambda'} (e^{-is \bar H_l} - e^{-is  H_l} +e^{-is  H_l}) \Pi_{\le \Lambda} \| \\
& \le \| \Pi_{\le \Delta '}\Pi_{>\Lambda'} (e^{-is \bar H_l} - e^{-is  H_l}) \Pi_{\le \Lambda} \| + e^{-\lambda(\Lambda'-\Lambda)}(e^{\alpha |s| M}-1) \\
& = \| (\Pi_{\le \Delta '} -\Pi_{\le \Lambda'}) (e^{-is \bar H_l} - e^{-is  H_l}) \Pi_{\le \Lambda} \| + e^{-\lambda(\Lambda'-\Lambda)}(e^{\alpha |s| M}-1) \\
\nonumber   & \le \| \Pi_{\le \Delta '} (e^{-is \bar H_l} - e^{-is  H_l}) \Pi_{\le \Lambda} \| + \| \Pi_{\le \Lambda'} (e^{-is \bar H_l} - e^{-is  H_l}) \Pi_{\le \Lambda} \|\\ 
& \quad+ e^{-\lambda(\Lambda'-\Lambda)} (e^{\alpha |s| M}-1)
\\
& \le (e^{-\lambda(\Delta'-\Lambda)}+2e^{-\lambda(\Lambda'-\Lambda)}) (e^{\alpha |s| M}-1) \\
\label{eq:lemma2eq2}
& \le 3e^{-\lambda(\Lambda'-\Lambda)} (e^{\alpha |s| N}-1) \;.
\end{align}

In particular, when $k$ and $d$ are constants,
we have $\alpha M \le e J d N$, and Eq.~\eqref{eq:lemma2eq2} implies
\begin{align}
    \|\Pi_{>\Lambda'} e^{-is \bar H_l} \Pi_{\le \Lambda}\| & \le 3e^{-\alpha_1(\Lambda'-\Lambda)/J} (e^{\alpha_2 J |s| M}-1) \;,
\end{align}
where $\alpha_1=1/(2dk)$ and $\alpha_2=e d$ are also positive constants.

\qed

\subsection*{Supplementary Note 3: Approximation errors for product formulas}

We consider generic product formulas of $q>1$ terms,
\begin{align}
W ({\bf s}) &=  e^{- i s_q  H_{l_q}}   \cdots   e^{- i s_2  H_{l_2}}  e^{- i s_1  H_{l_1}} \;, \\
\bar W ({\bf s}) &=  e^{- i s_q  \bar H_{l_q}}   \cdots   e^{- i s_2  \bar H_{l_2}}  e^{- i s_1  \bar H_{l_1}} \;,
\end{align}
where ${\bf s}=s_1,\ldots,s_q$, $s_j \in \mathbb R$,
and $1 \le l_j \le L$. 
We also define
\begin{align}
W^{\bf \Lambda}({\bf s}) & =  \Pi_{\le \Lambda_{q}} e^{- i s_q  H_{l_q}}   \cdots \Pi_{\le \Lambda_2} e^{- i s_2  H_{l_2}}\Pi_{\le \Lambda_{1}} e^{- i s_1  H_{l_1}} \;, \\
\bar W^{\bf \Lambda}({\bf s}) & =  \Pi_{\le \Lambda_{q}} e^{- i s_q  \bar H_{l_q}}   \cdots \Pi_{\le \Lambda_2} e^{- i s_2  \bar H_{l_2}}\Pi_{\le \Lambda_1} e^{- i s_1  \bar H_{l_1}} \;,
\end{align}
where ${\bf \Lambda}=(\Lambda_1,\ldots,\Lambda_q)$, and $\Lambda_j \ge 0$ for all $j$.
Using Lemmas~1 and ~2, we now prove a number of results (corollaries) on the approximation errors for these product formulas
that will be required for the proof of Thm.~1. First, we will prove that $W({\bf s})$ produces approximately the same state as
$W^\bl({\bf s})$ when the initial state is supported on the low-energy subspace and for a suitable choice of $\bl$. Next, we will show that the approximation error from replacing $W^\bl({\bf s})$ by $\bar W^\bl({\bf s})$ is of the same order as that of the first approximation
for a suitable choice of $\Delta'$ and effective Hamiltonians.
A similar result is obtained if we further replace 
$\bar W^\bl({\bf s})$ by $\bar W({\bf s})$. Combining these results we show that the state produced by $W({\bf s})$ approximates that produced by $\bar W({\bf s})$
for a suitable choice of $\Delta'$.

\begin{corollary}
	\label{cor:projectapprox}
	Let $\delta >0$, $\Delta \ge 0$, $\lambda=1/(2Jdk)$, and $\alpha=eJ$. Then, if $\bl$  satisfies $\Lambda_j - \Lambda_{j-1} \ge \frac 1 \lambda (\alpha |s_j| M + \log(q/\delta))$ and $\Lambda_0=\Delta$,
	\begin{align}
	\| (W^{\bf \Lambda}({\bf s}) - W({\bf s}))
	\Pi_{\le \Delta} \| \le \delta \;.
	\end{align}
\end{corollary}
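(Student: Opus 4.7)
The plan is a standard telescoping argument: insert the projectors $\Pi_{\le \Lambda_j}$ one exponential at a time, reading from the inside out, and bound the error introduced at each insertion by Lemma~\ref{lem:TimeLeak}.

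First I would define the partial products
\[
V_j = e^{-is_j H_{l_j}} \cdots e^{-is_1 H_{l_1}} \; , \qquad \widetilde V_j = \Pi_{\le \Lambda_j} e^{-is_j H_{l_j}} \cdots \Pi_{\le \Lambda_1} e^{-is_1 H_{l_1}} \; ,
\]
with $V_0 = \widetilde V_0 = I$, so that $V_q = W({\bf s})$ and $\widetilde V_q = W^\bl({\bf s})$. Writing $I = \Pi_{\le \Lambda_j} + \Pi_{>\Lambda_j}$ in the outermost slot yields the identity
\[
V_j - \widetilde V_j = e^{-is_j H_{l_j}}(V_{j-1} - \widetilde V_{j-1}) + \Pi_{>\Lambda_j} e^{-is_j H_{l_j}} \widetilde V_{j-1} \; ,
\]
which splits the error into a piece accumulated before step $j$ and a single new leakage piece created at step $j$.

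Next I would use that the leftmost factor of $\widetilde V_{j-1}$ is $\Pi_{\le \Lambda_{j-1}}$ for $j \ge 2$, and that the convention $\Lambda_0 = \Delta$ gives $\widetilde V_0 \Pi_{\le \Delta} = \Pi_{\le \Lambda_0} \Pi_{\le \Delta}$, so that $\widetilde V_{j-1} \Pi_{\le \Delta} = \Pi_{\le \Lambda_{j-1}} \widetilde V_{j-1} \Pi_{\le \Delta}$ for every $j \ge 1$. Combining this with $\|\widetilde V_{j-1}\| \le 1$ (it is a product of projectors and unitaries) and Lemma~\ref{lem:TimeLeak}, the leakage piece obeys
\[
\bigl\| \Pi_{>\Lambda_j} e^{-is_j H_{l_j}} \widetilde V_{j-1} \Pi_{\le \Delta} \bigr\| \le e^{-\lambda(\Lambda_j - \Lambda_{j-1})} \bigl( e^{\alpha |s_j| M} - 1 \bigr) \le \delta/q \; ,
\]
where the last inequality follows from the hypothesis $\Lambda_j - \Lambda_{j-1} \ge \frac{1}{\lambda}(\alpha |s_j| M + \log(q/\delta))$ together with $e^{\alpha |s_j| M} - 1 \le e^{\alpha |s_j| M}$.

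Finally, unitarity of $e^{-is_j H_{l_j}}$ gives $\|e^{-is_j H_{l_j}}(V_{j-1} - \widetilde V_{j-1}) \Pi_{\le \Delta}\| = \|(V_{j-1} - \widetilde V_{j-1}) \Pi_{\le \Delta}\|$, so a one-line induction on $j$ yields $\|(V_j - \widetilde V_j) \Pi_{\le \Delta}\| \le j \delta/q$, and setting $j = q$ produces the claim. The only delicate point is the base case: it relies on the convention $\Lambda_0 = \Delta$ so that Lemma~\ref{lem:TimeLeak} can be applied on the first step with the lower energy window equal to $\Delta$ rather than something uncontrolled; once this is in place, the rest is purely bookkeeping.
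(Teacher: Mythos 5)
Your proof is correct and is essentially the paper's own argument: unrolling your induction recovers exactly the telescoping identity the paper writes explicitly, and both proofs bound each of the $q$ leakage terms $\|\Pi_{>\Lambda_j} e^{-is_j H_{l_j}} \Pi_{\le \Lambda_{j-1}}\|$ by $\delta/q$ via Lemma~\ref{lem:TimeLeak} and the hypothesis on $\Lambda_j - \Lambda_{j-1}$. The only difference is presentational (recursion versus an explicit sum), and your attention to the base case $\Lambda_0 = \Delta$ matches the paper's implicit use of the same convention.
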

\begin{proof}
	We use the identity
	\begin{align}
	\nonumber
	W({\bf s}) - W^{\bf \Lambda}({\bf s})& = \Pi_{> \Lambda_q}  e^{- i s_q   H_{l_q}}\Pi_{\le \Lambda_{q-1}}  \cdots \Pi_{\le \Lambda_{1}} e^{- i s_1   H_{l_1}} + \cdots   \\
	& + e^{- i s_q   H_{l_q}} \cdots \Pi_{> \Lambda_2} e^{- i s_2   H_{l_2}} \Pi_{\le \Lambda_1}e^{- i s_1   H_{l_1}}+ e^{- i s_q   H_{l_q}} \cdots e^{- i s_2   H_{l_2}} \Pi_{> \Lambda_1}e^{- i s_1   H_{l_1}}  \;.
	\end{align}
	Since $\| e^{-i s_j   H_{l_j}}\| = \| \Pi_{\le \Lambda_j}\|=1$, we can use the triangle inequality and Lemma~1 to obtain
	\begin{align}
	\|(   W^{\bf \Lambda}({\bf s})-   W({\bf s}) ) \Pi_{\le \Delta}  \|&  \le \sum_{j=1}^{q} \| \Pi_{>\Lambda _j}
	e^{-i s_j  H_{l_j}} \Pi_{\le \Lambda_{j-1}} \| \\
	&\le \sum_{j=1}^{q} e^{-\lambda (\Lambda_j - \Lambda_{j-1})}(e^{\alpha |s_j| M }- 1 ) \\
	& \le \sum_{j=1}^{q} \delta/q \\
	& = \delta \;.
	\end{align}
\end{proof}



\begin{corollary}
	\label{cor:projectapprox2}
	Let $\delta >0$, $\Delta \ge 0$, $\lambda=1/(2Jdk)$, and $\alpha=eJ$. Then, if $\bl$  satisfies $\Lambda_j - \Lambda_{j-1} \ge \frac 1 \lambda (\alpha |s_j| M + \log(q/\delta))$, $\Lambda_0=\Delta$, and $\Delta' \ge \Lambda_q$,
	\begin{align}
	\| (\bar W^{\bf \Lambda}({\bf s}) - W^{\bf \Lambda}({\bf s}))
	\Pi_{\le \Delta} \| \le \delta \;.
	\end{align}
\end{corollary}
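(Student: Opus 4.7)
The plan is to mimic closely the telescoping argument used in the proof of Corollary~\ref{cor:projectapprox}, but with Lemma~\ref{lem:EffTimeLeak} in place of Lemma~\ref{lem:TimeLeak}. The essential observation is that $\bar W^\bl({\bf s})$ and $W^\bl({\bf s})$ differ only in whether the $j$-th exponential uses $\bar H_{l_j}$ or $H_{l_j}$, while the projectors $\Pi_{\le \Lambda_j}$ between consecutive exponentials are identical in the two expressions. This makes it natural to interpolate between the two products one factor at a time.

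Concretely, I would define hybrid products $V_0, V_1, \ldots, V_q$, where $V_j$ is the operator obtained from $W^\bl({\bf s})$ by replacing $H_{l_i}$ with $\bar H_{l_i}$ for every index $i \le j$ (and leaving the projectors in place). Thus $V_0 = W^\bl({\bf s})$ and $V_q = \bar W^\bl({\bf s})$, giving the telescoping decomposition
\begin{align}
\bar W^\bl({\bf s}) - W^\bl({\bf s}) = \sum_{j=1}^{q} (V_j - V_{j-1}).
\end{align}
The difference $V_j - V_{j-1}$ has the form $X_j \, \Pi_{\le \Lambda_j}\bigl(e^{-is_j \bar H_{l_j}} - e^{-is_j H_{l_j}}\bigr)\Pi_{\le \Lambda_{j-1}} \, Y_j$, where $X_j$ is a product of projectors and unitaries to the left and $Y_j$ is a similar product to the right; each factor of $X_j$ and $Y_j$ has spectral norm at most $1$.

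Applying submultiplicativity of $\|\cdot\|$ and then Lemma~\ref{lem:EffTimeLeak} with $\Lambda' = \Lambda_j$ and $\Lambda = \Lambda_{j-1}$ (which is permitted since the hypothesis $\Delta' \ge \Lambda_q$ together with the monotonicity $\Lambda_q \ge \cdots \ge \Lambda_0$ guarantees $\Delta' \ge \Lambda_j \ge \Lambda_{j-1}$), each term is bounded by $e^{-\lambda(\Lambda_j - \Lambda_{j-1})}(e^{\alpha|s_j|M} - 1)$. Using the gap condition $\Lambda_j - \Lambda_{j-1} \ge \tfrac{1}{\lambda}(\alpha|s_j|M + \log(q/\delta))$, this bound simplifies to $\delta/q$, and the triangle inequality over $j=1,\ldots,q$ completes the argument.

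The only subtle step is the verification that the side factors $X_j, Y_j$ (which include both $H_l$-exponentials and $\bar H_l$-exponentials, interleaved with projectors) really do have norm at most $1$; this is immediate, since each exponential is unitary and each projector has norm $1$. No genuine obstacle arises, because the projectors $\Pi_{\le \Lambda_j}$ bracketing the replaced factor produce exactly the operator to which Lemma~\ref{lem:EffTimeLeak} directly applies. The action on $\Pi_{\le \Delta}$ is handled automatically by the rightmost factor $\Pi_{\le \Lambda_0} = \Pi_{\le \Delta}$ already present in $W^\bl$ and $\bar W^\bl$, so no separate reduction to a low-energy input state is needed.
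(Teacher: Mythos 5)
Your proposal is correct and is essentially identical to the paper's own proof: the hybrid-product telescoping decomposition you describe (difference factor at position $j$, barred exponentials to one side, unbarred to the other, all side factors of norm at most $1$) is exactly the identity the paper writes out, and the per-term bound via the first inequality of Lemma~\ref{lem:EffTimeLeak} with $\Lambda'=\Lambda_j$, $\Lambda=\Lambda_{j-1}$ (justified by $\Delta'\ge\Lambda_q\ge\Lambda_j$) followed by the gap condition and the triangle inequality is the same chain of estimates. The only cosmetic slip is the remark that $\Pi_{\le\Lambda_0}$ is ``already present'' in $W^{\bf\Lambda}$ --- it is not part of the definition, but it is supplied by the $\Pi_{\le\Delta}$ in the quantity being bounded, so the $j=1$ term is still handled correctly.
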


\begin{proof}
	We use the identity
	\begin{align}
	\nonumber
	\bar W^{\bf \Lambda}({\bf s}) -   W^{\bf \Lambda}({\bf s}) =&
	\Pi_{\le \Lambda_{q}} (e^{- i s_q  \bar H_{l_q}} -
	e^{- i s_q  H_{l_q}}) \Pi_{\le \Lambda_{q-1}}\cdots \Pi_{\le \Lambda_1} e^{- i s_1  \bar H_{l_1}} + \ldots  \\
	&+ \Pi_{\le \Lambda_{q}} e^{- i s_q  H_{l_q}} \Pi_{\le \Lambda_{q-1}}   \cdots   \Pi_{\le \Lambda_1} (e^{- i s_1  \bar H_{l_1}}- e^{- i s_1  H_{l_1}}) \;.
	\end{align}
	Since $\| e^{-i s_j   H_{l_j}}\| =\| e^{-i s_j   \bar H_{l_j}}\|=\| \Pi_{\le \Lambda_j}\| =1$, we can use the triangle inequality and Eq.~\eqref{eq:lemma3.2} in Lemma~2 to obtain
	\begin{align}
	\|(\bar W^{\bf \Lambda}({\bf s}) -   W^{\bf \Lambda}({\bf s}) )\Pi_{\le \Delta}  \| & \le \sum_{j=1}^q \| \Pi_{\le \Lambda_j} (e^{- i s_j  \bar H_{l_j}}- e^{- i s_j  H_{l_j}})\Pi_{\le \Lambda_{j-1}}\| \\
	& \le \sum_{j=1}^q e^{-\lambda (\Lambda_j - \Lambda_{j-1})} (e^{\alpha |s_j| M} - 1) \\
	& \le \sum_{j=1}^q \delta/q \\
	\label{eq:lem4.main2}
	& = \delta \;.
	\end{align}
\end{proof}

\begin{corollary}
	\label{cor:projectapprox3}
	Let $\delta >0$, $\Delta \ge 0$, $\lambda=1/(2Jdk)$, and $\alpha=eJ$. Then, if $\bl$  satisfies $\Lambda_j - \Lambda_{j-1} \ge \frac 1 \lambda (\alpha |s_j|M + \log(q/\delta))$, $\Lambda_0=\Delta$, and $\Delta' \ge \Lambda_q$,
	\begin{align}
	\| (\bar W({\bf s}) - \bar W^{\bf \Lambda}({\bf s}))
	\Pi_{\le \Delta} \| \le  3 \delta \;.
	\end{align}
\end{corollary}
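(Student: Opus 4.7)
The plan is to mirror the telescoping argument used in the proof of Corollary~\ref{cor:projectapprox}, but with every $H_{l_j}$ replaced by $\bar H_{l_j}$ throughout, and to invoke the second bound of Lemma~\ref{lem:EffTimeLeak} in place of Lemma~\ref{lem:TimeLeak}. Concretely, I would start from the identity
\begin{align}
\nonumber
    \bar W({\bf s}) - \bar W^\bl({\bf s}) & = \Pi_{>\Lambda_q} e^{-is_q\bar H_{l_q}} \Pi_{\le \Lambda_{q-1}} \cdots \Pi_{\le \Lambda_1} e^{-is_1 \bar H_{l_1}} + \cdots \\
    & \quad + e^{-is_q \bar H_{l_q}} \cdots e^{-is_2 \bar H_{l_2}} \Pi_{>\Lambda_1} e^{-is_1 \bar H_{l_1}},
\end{align}
obtained exactly as in Corollary~\ref{cor:projectapprox} by successively substituting $\Pi_{\le \Lambda_j} = I - \Pi_{>\Lambda_j}$ into $\bar W^\bl({\bf s})$ and collecting the error terms. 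The $j$-th summand contains a central block $\Pi_{>\Lambda_j} e^{-is_j \bar H_{l_j}} \Pi_{\le \Lambda_{j-1}}$ (with $\Lambda_0=\Delta$), surrounded only by unitary factors and projectors, each of spectral norm $1$.

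Next, multiplying on the right by $\Pi_{\le \Delta}=\Pi_{\le \Lambda_0}$ and applying the triangle inequality together with submultiplicativity of the spectral norm, I would obtain
\begin{align}
    \| (\bar W({\bf s}) - \bar W^\bl({\bf s})) \Pi_{\le \Delta} \| \le \sum_{j=1}^{q} \| \Pi_{>\Lambda_j} e^{-is_j \bar H_{l_j}} \Pi_{\le \Lambda_{j-1}} \|.
\end{align}
At this point I would invoke the second inequality of Lemma~\ref{lem:EffTimeLeak}, whose hypothesis $\Delta' \ge \Lambda_j \ge \Lambda_{j-1} \ge 0$ follows from the assumption $\Delta' \ge \Lambda_q$ together with the monotonicity $\Lambda_j \ge \Lambda_{j-1}$ implied by the positivity of the gaps in the statement. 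The lemma yields
\begin{align}
 \| \Pi_{>\Lambda_j} e^{-is_j \bar H_{l_j}} \Pi_{\le \Lambda_{j-1}} \| \le 3 e^{-\lambda(\Lambda_j - \Lambda_{j-1})}(e^{\alpha|s_j|M} - 1),
\end{align}
which by the assumed gap condition $\Lambda_j - \Lambda_{j-1} \ge \tfrac{1}{\lambda}(\alpha|s_j|M + \log(q/\delta))$ is bounded by $3\delta/q$. Summing over the $q$ terms delivers the desired bound of $3\delta$.

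There is no real obstacle here: the argument is a direct analogue of the bookkeeping in Corollaries~\ref{cor:projectapprox} and~\ref{cor:projectapprox2}. The factor of $3$ in the conclusion is inherited verbatim from the second bound of Lemma~\ref{lem:EffTimeLeak}. The only point requiring a moment of thought is that, unlike in Corollary~\ref{cor:projectapprox2} where each central block is sandwiched between two ``$\le$''-projectors and is controlled by the \emph{first} bound of Lemma~\ref{lem:EffTimeLeak} (with constant $1$), here the leftmost factor of each central block is $\Pi_{>\Lambda_j}$; this is exactly what the \emph{second} bound of that lemma is designed to control, at the cost of the factor of $3$.
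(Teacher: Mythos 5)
Your proposal is correct and follows essentially the same route as the paper's own proof: the identical telescoping identity for $\bar W({\bf s}) - \bar W^\bl({\bf s})$, the triangle inequality, and the second bound of Lemma~\ref{lem:EffTimeLeak} applied to each block $\Pi_{>\Lambda_j} e^{-is_j \bar H_{l_j}} \Pi_{\le \Lambda_{j-1}}$, with the gap condition giving $3\delta/q$ per term. Your remark on why the factor of $3$ appears here (versus the constant $1$ in Corollary~\ref{cor:projectapprox2}) correctly identifies the only structural difference between the two corollaries.
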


\begin{proof}
	We use the identity
	\begin{align}
	\nonumber
	\bar W({\bf s}) - \bar W^{\bf \Lambda}({\bf s})& = \Pi_{> \Lambda_q}  e^{- i s_q   \bar H_{l_q}} \Pi_{\le \Lambda_{q-1}} \cdots   \Pi_{\le \Lambda_{1}} e^{- i s_1   \bar H_{l_1}} + \cdots   \\
	& + e^{- i s_q   \bar H_{l_q}} \cdots \Pi_{> \Lambda_2} e^{- i s_2   \bar H_{l_2}} \Pi_{\le \Lambda_1}e^{- i s_1   \bar H_{l_1}}+ e^{- i s_q   \bar H_{l_q}} \cdots e^{- i s_2   \bar H_{l_2}} \Pi_{> \Lambda_1}e^{- i s_1   \bar H_{l_1}}  \;.
	\end{align}
	Since $\| e^{-i s_j   \bar H_{l_j}}\| = \| \Pi_{\le \Lambda_j}\|=1$, we can use the triangle inequality  and Eq.~\eqref{eq:lemma2eq2} in Lemma~2
	to obtain
	\begin{align}
	\| ( \bar W ({\bf s}) - \bar W^{{\bf \Lambda}} ({\bf s}) )\Pi_{\le \Delta}  \| & \le
	\sum_{j=1}^q \| \Pi_{>\Lambda_j} e^{-is_j \bar H_{l_j}} \Pi_{\le \Lambda_{j-1}} \| \\
	& \le  \sum^{q}_{j=1} 3 e^{-\lambda (\Lambda_j - \Lambda_{j-1})} (e^{\alpha |s_j| M} - 1) \\
	& \le 3 \sum^{q}_{j=1} \delta/q \\
	& = 3 \delta \label{eq:lem4.main3} \;.
	\end{align}
\end{proof}

\begin{corollary}
	\label{cor:projectapprox4}
	Let $\delta >0$, $\Delta \ge 0$, $\lambda=1/(2Jdk)$, $\alpha=eJ$, and $|{\bf s}|=\sum_{j=1}^q |s_j|$. Then, if $\Delta'\ge \Delta +  \frac{1}{\lambda} \left(\alpha |{\bf s}| M + q \log(q/\delta)\right)$,
	\begin{align}
	\| (\bar W({\bf s}) - W({\bf s}))
	\Pi_{\le \Delta} \| \le  5 \delta \;.
	\end{align}
\end{corollary}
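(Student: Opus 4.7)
The plan is to reduce Corollary~\ref{cor:projectapprox4} to the previous three corollaries via a triangle inequality, choosing the intermediate sequence $\bl$ to exactly saturate the hypotheses of those corollaries. Concretely, I would define $\Lambda_0 = \Delta$ and, for $j = 1, \ldots, q$,
\begin{align}
\Lambda_j \;=\; \Lambda_{j-1} + \frac{1}{\lambda}\bigl(\alpha\,|s_j|\,M + \log(q/\delta)\bigr),
\end{align}
so that the gap condition $\Lambda_j - \Lambda_{j-1} \ge \frac{1}{\lambda}(\alpha|s_j|M + \log(q/\delta))$ required by Corollaries~\ref{cor:projectapprox}, \ref{cor:projectapprox2}, and~\ref{cor:projectapprox3} is met with equality.

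Next I would verify the remaining hypothesis $\Delta' \ge \Lambda_q$ needed by Corollaries~\ref{cor:projectapprox2} and~\ref{cor:projectapprox3}. Telescoping the definition gives
\begin{align}
\Lambda_q \;=\; \Delta + \frac{1}{\lambda}\bigl(\alpha\,|\mathbf{s}|\,M + q\log(q/\delta)\bigr),
\end{align}
since $\sum_{j=1}^q |s_j| = |\mathbf{s}|$ by definition. The standing hypothesis of Corollary~\ref{cor:projectapprox4} then gives exactly $\Delta' \ge \Lambda_q$, so all three previous corollaries apply with this choice of $\bl$.

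Finally I would apply the triangle inequality to the telescoping decomposition
\begin{align}
\bar W(\mathbf{s}) - W(\mathbf{s}) \;=\; \bigl(\bar W(\mathbf{s}) - \bar W^{\bl}(\mathbf{s})\bigr) + \bigl(\bar W^{\bl}(\mathbf{s}) - W^{\bl}(\mathbf{s})\bigr) + \bigl(W^{\bl}(\mathbf{s}) - W(\mathbf{s})\bigr),
\end{align}
right-multiply by $\Pi_{\le \Delta}$, and take norms. Corollary~\ref{cor:projectapprox3} bounds the first piece by $3\delta$, Corollary~\ref{cor:projectapprox2} bounds the second by $\delta$, and Corollary~\ref{cor:projectapprox} bounds the third by $\delta$, giving $5\delta$ in total.

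There is no real obstacle here: the result is a clean bookkeeping corollary of the preceding three. The only mildly delicate point is making sure the intermediate projectors $\Pi_{\le \Lambda_j}$ are chosen so that (a) consecutive gaps are large enough for Lemmas~\ref{lem:TimeLeak} and~\ref{lem:EffTimeLeak} to control each leakage contribution by $\delta/q$, and (b) the final energy $\Lambda_q$ still lies below the effective-Hamiltonian cutoff $\Delta'$. The definition above balances these two requirements optimally, since any smaller gaps would violate (a) and any larger ones would worsen the bound on $\Delta'$ needed in the statement.
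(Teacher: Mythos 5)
Your proposal is correct and follows essentially the same route as the paper: the same choice of $\Lambda_j$ saturating the gap condition so that $\Lambda_q = \Delta + \frac{1}{\lambda}(\alpha|\mathbf{s}|M + q\log(q/\delta)) \le \Delta'$, the same three-term telescoping decomposition, and the same application of Corollaries~\ref{cor:projectapprox3}, \ref{cor:projectapprox2}, and~\ref{cor:projectapprox} to obtain $3\delta + \delta + \delta = 5\delta$. No gaps.
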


\begin{proof}
	We define the energies $\Lambda_q \ge \ldots \ge \Lambda_0=\Delta$ via
	\begin{align}
	\Lambda_j - \Lambda_{j-1} = \frac 1 \lambda (\alpha |s_j|M + \log(q/\delta)) \;.
	\end{align}
	In particular, $\Delta'\ge \Lambda_q = \Delta +  \frac 1 \lambda \left(\alpha |{\bf{s}}| M + q \log(q/\delta)\right)$. 
	We use the identity
	\begin{align}
	\bar W({\bf s}) - W({\bf s}) = (\bar W({\bf s}) -\bar W^{{\bf \Lambda}}({\bf s})) +(\bar W^{{\bf \Lambda}}({\bf s}) - W^{{\bf \Lambda}}({\bf s})) +(W^{{\bf \Lambda}}({\bf s}))
	- W({\bf s})) \;.
	\end{align}
	The triangle inequality and Corollaries~\ref{cor:projectapprox3},~\ref{cor:projectapprox2}, and~\ref{cor:projectapprox} imply
	\begin{align}
	\nonumber
	\|   ( \bar W({\bf s}) - & W({\bf s}) )\Pi_{\le \Delta} \| & \\ 
	& \le \| (\bar W({\bf s}) -\bar W^{{\bf \Lambda}}({\bf s})) \Pi_{\le \Delta}\|+ \|(\bar W^{{\bf \Lambda}}({\bf s}) - W^{{\bf \Lambda}}({\bf s})) \Pi_{\le \Delta}\| + \| (W^{{\bf \Lambda}}({\bf s}))
	- W({\bf s})) \Pi_{\le \Delta} \| \\
	& \le 3 \delta + \delta + \delta \\
	& = 5 \delta \;.
	\end{align}
\end{proof}

To prove the previous corollaries,
we often used $(e^{\alpha |s_j| M }-1) \le e^{\alpha |s_j| M } $, but we note that a different upper bound such as $(e^{\alpha |s_j| M }-1) \le \alpha |s_j| M e^{\alpha |s_j| M }$ may provide improved results (e.g., a smaller value of $\Delta'$ in Cor.~\ref{cor:projectapprox4}), especially if  $\alpha |s_j| M \ll 1$. However, while this other upper bound could be used to improve the constants hidden by the $\cO$ notation in the complexity of our method, we were unable to improve its asymptotic scaling from using $(e^{\alpha |s_j| M }-1) \le \alpha |s_j| M e^{\alpha |s_j| M }$. 


\subsection*{Proof of Thm.~1}

For some $\Delta' \ge \Delta \ge 0$, let $\bar U(s) = e^{-i s \bar H}$ be the evolution operator with the effective Hamiltonian and $\bar W_p(s)$,
$p \ge 1$,
be the corresponding $p$-th order product formula obtained by replacing
$H_l \rightarrow \bar H_l$ in $W_p(s)$ of Eq.~1. Since the condition $H_l \ge 0$ implies $\| \bar H_l \| \le \Delta'$,
we obtain
\begin{align}
\| (\bar U(s) - \bar W_p(s)) \Pi_{\le \Delta} \| & \le \|  \bar U(s) - \bar W_p(s) \| \\
\label{eq:SM:trottererror}
& \le \epsilon(\Delta') \;,
\end{align}
where $\epsilon (\Delta')= \gamma (L \Delta' |s|)^{p+1}$ is an upper bound of the error induced by product formulas using effective operators and $\gamma=\cO(1)$ is a constant~\cite{CST+19}. 
This error bound grows with $\Delta'$. It does not exploit any structure of the effective Hamiltonians so it may be possible to improve it under further constraints.

Additionally,
\begin{align}
\|  (U(s) - \bar U(s)) \Pi_{\le \Delta} \| = 0\;, 
\end{align}
and then
\begin{align}
\| (U(s) - \bar W_p(s)) \Pi_{\le \Delta} \| & \le \epsilon (\Delta')\;.
\end{align}
The other contribution to the error is due to Cor.~\ref{cor:projectapprox4}, which can be turned around to obtain a bound on the error that depends on $\Delta'$.
Let $\lambda = 1/(2Jdk)$, $\alpha =eJ$, and $q>1$ be the number of terms in the product $\bar W_p(s)$. Then, Cor.~\ref{cor:projectapprox4} implies
\begin{align}
\| (\bar W_p(s) - W_p(s))
\Pi_{\le \Delta} \| \le  5 \delta (\Delta')\;,
\end{align}
and
\begin{align}
\delta (\Delta')=  e^{-\frac{1}{q} (\lambda (\Delta' - \Delta) -  \alpha |{\bf s}| M -q \log q)}  \;.
\end{align}
This error bound decreases when $\Delta'$ increases. It is now valid for all $\Delta' \ge 0$ but can be irrelevant (larger than 1) when, for example, $\Delta' \le \Delta$. 
We assume that our product formula is such that $|{\bf s}|\leq \kappa L |s|$ for a constant $\kappa \ge 1$ and let $\alpha'= \kappa \alpha$. Then 
\begin{align}
\delta (\Delta')= e^{- \frac{1}{q}(\lambda (\Delta' - \Delta) -  \alpha' |s| M L - q \log q )}  \;.
\end{align}
Thus, for any $\Delta' \ge \Delta$, the triangle inequality implies $\| (U(s) - W_p(s))\Pi_{\le \Delta}\| \le \epsilon(\Delta') + 5 \delta(\Delta')$.

For given $|s|$, we can search for $\Delta' \ge \Delta$ that minimizes the overall error bound. 
Let that $\Delta'$ be $\Delta'_{\min}$. The global minimum for the error is then $\epsilon(\Delta'_{\min}) + 5 \delta (\Delta'_{\min})$ and,  for all $\Delta' \ge \Delta$, it satisfies
\begin{align}
\epsilon(\Delta'_{\min}) + 5 \delta (\Delta'_{\min}) \le
\epsilon(\Delta') + 5 \delta (\Delta') \;.
\end{align}

Then, we can fix any value of $\Delta' \ge \Delta$ and obtain a bound for the overall error from computing $\epsilon(\Delta') + 5 \delta (\Delta')$.
In particular, we choose
\begin{align}
\label{eq:D'def}
\Delta' &=\Delta + \frac{1}{\lambda}\alpha' |s| M L +\frac{q}{\lambda}  \log q + \frac{q}{\lambda} (p+1) \log \left( \frac{1}{J |s|} \right) \;,
\end{align}
and, assuming $J |s| \le 1$, $q>1$, we obtain
\begin{align}
\delta(\Delta') & = (J|s|)^{p+1}\\ 
& \le (\Delta' |s|)^{p+1} \\
& \le  (L \Delta' |s|)^{p+1} \;.
\end{align}
The constraint in $J|s|$ is to avoid errors larger than 1: it is sufficient for $\delta(\Delta')\le 1$ and for $\Delta' \ge \Delta$.
Therefore, if $J|s| \le 1$,
\begin{align}
\|  (U(s) - W_p(s)) \Pi_{\le \Delta} \| & \le \epsilon(\Delta'_{\min}) + 5 \delta (\Delta'_{\min}) \\
& \le \epsilon(\Delta') + 5 \delta(\Delta') \\
& \le (\gamma+5) (L \Delta' |s|)^{p+1} \\
& \le \tilde{\gamma} (L  \Delta' |s|)^{p+1} \\
\label{eq:tildeepsilon}
& = \tilde \epsilon(\Delta')\;,
\end{align}
where $\Delta'$ was determined in Eq.~\eqref{eq:D'def} and
$\tilde \gamma= \gamma + 5$ is a constant.
While our choice of $\Delta'$ in Eq.~\eqref{eq:D'def} 
does not correspond to the global  minimum
of the overall error, an exact calculation show that it is not far from $\Delta'_{\min}$ and 
provides the same asymptotic complexity for our method. Nevertheless, if one is interested in optimizing some constants hidden by the $\cO$ notation in the overall complexity, using $\Delta'_{\min}$ instead will be a better choice. Also note that, in general, we can additionally require $\Delta ' \le \|H\|$ for all $s$; this condition is not satisfied by Eq.~\eqref{eq:D'def}. But since our results will be particularly useful when $\Delta' \ll \|H\|$, we do not expect any improvement from this additional requirement and only consider Eq.~\eqref{eq:D'def}
in the following.

An upper bound for $\Delta'$ can be given 
in terms of three factors $\beta_1$, $\beta_2$, and $\beta_3$, which can be easily computed from the parameters that define $H$ as $\Delta'=\Delta + \beta_1 J \log(\beta_2/(J|s|))+ \beta_3 J^2 N |s|$.
This is the expression provided in Thm.~1 in the main text.
Using the properties $q>1$, $p \ge 1$, and $N \le ML \le dN$, the factors satisfy 
\begin{align}
\beta_1 & = 2 q dk (p+1)\\
\beta_2 & = q^{1/(p+1)} \;,\\
\beta_3 & \le 2 e k d^2 \kappa \;.
\end{align}
When $d$,  $k$, $L$, and $q$ are $\cO(1)$ constants, 
we obtain $\beta_1=\cO(1)$, $\beta_2=\cO(1)$ and $\beta_3=\cO(1)$.

\qed

\vspace{0.5cm}

Note that the error $\tilde \epsilon(\Delta')$
approaches zero as $s \rightarrow 0$ but
not as $|s|^{p+1}$, as in the case of $p$-th order product formulas. 
This is due to the appearance of $\log(1/(J|s|))$ in $\Delta'$. This logarithmic factor is the result of our  requirement that the ``leakage'' $\delta(\Delta')$ vanishes as $s \rightarrow 0$, implying $\Delta' \rightarrow \infty$. 
At the same time, when $s \rightarrow 0$, no evolution occurs
and one would have expected that $\Delta' \rightarrow \Delta$
in this limit. This suggests that it might be possible to obtain a tighter expression for $\Delta'$ than that in Eq.~\eqref{eq:D'def}, one that does not diverge in the limit $s \rightarrow 0$, but leave this as an open problem.
Nevertheless, our purpose is to make $|s|$ as large as possible in the product formula and the limit $s \rightarrow 0$ is not of our concern.
To this end, our previous analyses will suffice.


For the special case of Trotter-Suzuki product formulas,
the constant $\gamma$ that appears in $\epsilon(\Delta')$
has been previously studied in Ref.~\cite{BAC07}. In this case, the number of terms satisfies $q \le 5^p L$ and $|{\bf s}| \le c^p L |s|$, for some constant $c \approx 2.32$. Thus, we can take $\kappa=c^p$ and $\kappa=\cO(1)$ for $p=\cO(1)$.

The assumption $H_l \ge 0$ implies $\|\bar H_l\| \le \Delta'$
and thus the error bound in Eq.~\eqref{eq:SM:trottererror}.
In general, $H_l \ge 0$ can be met after a simple shifting $H_l \rightarrow H_l + a_l$, and the assumption seems irrelevant.
However, this shifting could result on a value of $\Delta$ and $\Delta'$ that scales, for example, with $N$ or $\|H_l\|$. If this is the case, the error bound of Eq.~\eqref{eq:SM:trottererror} could be comparable to that in the worst case (without the low-energy assumption), and would not result in a complexity improvement.

To clarify this point further, consider a local spin Hamiltonian where $H_l \ngeq 0$ and let $\Delta$ be the relevant low-energy parameter; that is, the initial state is supported on the subspace associated with eigenvalues of $H$ in the range $[E_0,E_0+\Delta]$, where $E_0$ is the lowest eigenvalue. 
The previous analyses can be extended to this general case as follows.
First, we let $\Pi_{\le \Delta'}$ be the projector into the subspace associated with the eigenvalues of $H$ in the range $[E_0,E_0+\Delta']$, where $\Delta' > \Delta$ is chosen as in Thm.~1. (As before, the effective Hamiltonians are $\bar H_l = \Pi_{\le \Delta'} H_l \Pi_{\le \Delta'}$. ) We also define
\begin{align}
    \tilde \Delta& := \max_l \min_a \|\bar H_l - a\| \\
    & = \max_l \frac 1 2 |E_l^{\max} - E_l^{\min}|\;,
\end{align}
which is an effective low-energy norm (obtained by a minimization that eliminates a constant term in $\bar H_l$), and $E_l^{\max}$ and $E_l^{\min}$ are the largest and smallest eigenvalues of $\bar H_l$, respectively. Then, the error bound induced by product formulas in the more general case where $H_l \ngeq 0$ is
\begin{align}
    \|\bar U(s) - \bar W_p(s) \| \le \epsilon(\tilde \Delta) \;,
\end{align}
where $\epsilon(\tilde \Delta)=\gamma(L \tilde \Delta |s|)^{p+1}$. This is because this error bound depends on nested commutators of the $\bar H_l$'s~\cite{CST+19}, and shifting $\bar H_l$ by a constant does not change the commutators.
This error has to be compared with Eq.~\eqref{eq:SM:trottererror}. A complexity improvement with respect to the worst case (without the low-energy assumption) might follow if $\tilde \Delta \ll \|H_l\|$; e.g., when $\tilde \Delta$ is constant. 

Characterizing those Hamiltonians for which $\tilde \Delta \ll \|H_l\|\le h$ occurs is an interesting but open problem. Nevertheless, for many important local Hamiltonians (e.g., frustration-free Hamiltonians), the condition $H_l \ge 0$ is readily satisfied. In these examples, $\Delta'$ (or $\tilde \Delta$) can be much less than $\|H_l\|$, resulting in a complexity improvement. The Heisenberg model of Fig. 1 is an example where our results apply.

\subsection*{Supplementary Note 4: Complexity of product formulas for local Hamiltonians}

We now determine the complexity of product formulas, which is the total number of exponentials of the $H_l$'s
to approximate the evolution operator $U(t)$ within given
precision $\varepsilon>0$. We give an explicit dependence of this complexity
in terms of the relevant parameters that specify $H$.
If a quantum algorithm is constructed to implement such product formula, then our result will determine the complexity of the quantum algorithm (number of two-qubit gates) from multiplying it by the complexity of implementing the exponential of $H_l$. The latter is linear in $kM$ following the results in Ref.~\cite{SOGKL02}.

\begin{theorem}
	\label{thm:complexity}
	Let $\varepsilon>0$, $\Delta \ge 0$, $t \in \mathbb R$, $H=\sum_{l=1}^L H_l$ a $k$-local Hamiltonian as above, $H_l \ge 0$, and $W_p(s)$ a $p$-th order product formula as in Eq.~1. Then, there exists
	\begin{align}
	r 
	&= \tilde{\mathcal{O}} \left(\frac{t^{1+ \frac 1 p}}{\varepsilon^ {\frac 1 p}} \left( L \Delta +  L d k q (\log q)J \right)^{1+ \frac 1 p} \right) + \cO\left( \frac{t^{1+ \frac 1 {2p+1}}}{\varepsilon^{ \frac 1 {2p+1}}}(L^2 d M  J^2)^{\frac 1 2 + \frac  1 {4p+2}}
	\right) \;,
	\end{align}
	such that
	\begin{align}
	\| (U(t) - (W_p(t/r))^r ) \Pi_{\le \Delta} \| \le \varepsilon \; .
	\end{align}
	The $\tilde \cO$ notation hides a polylogarithmic factor in
	$(|t| J L q d k/\varepsilon)$.
\end{theorem}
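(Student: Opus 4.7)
The plan is to combine Thm.~\ref{thm:mainthm} (the single-step low-energy error bound) with a standard telescoping across $r$ Trotter steps, and then solve for the smallest $r$ that guarantees an overall error at most $\varepsilon$. No new structural ideas are needed; the work lies in disentangling the three contributions to $\Delta'$ and carefully handling the self-referential $\log$.

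First I would set $s=t/r$ and invoke the observation (made right after Thm.~\ref{thm:mainthm}) that $U(s)\Pi_{\le\Delta}=\Pi_{\le\Delta}U(s)\Pi_{\le\Delta}$. Iterating this and applying the triangle inequality across the $r$ steps gives $\|(U(t)-(W_p(s))^r)\Pi_{\le\Delta}\|\le 2r\,\tilde\epsilon(\Delta')$ with $\tilde\epsilon(\Delta')=\tilde\gamma(L\Delta'|s|)^{p+1}$ and $\Delta'=\Delta+\beta_1 J\log(\beta_2/(J|s|))+\beta_3 J^2 N|s|$, valid whenever $J|s|\le 1$. Requiring the right-hand side to be at most $\varepsilon$ rearranges, after substituting $s=t/r$, to
\begin{align}
r^{p}\ \gtrsim\ (L\Delta')^{p+1}\,t^{p+1}/\varepsilon.
\end{align}
Since $\Delta'$ is a sum of three non-negative terms, it suffices to enforce this inequality separately for each of them (at the cost of a harmless $3^{p+1}$ factor absorbed into $\cO$).

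Next I would solve the three resulting conditions for $r$ and take the maximum. The $\Delta$ piece is standard and contributes $r=\cO((tL\Delta)^{1+1/p}/\varepsilon^{1/p})$. The log piece $\beta_1 J\log(\beta_2/(J|s|))$ I would split as $\log\beta_2+\log(1/(J|s|))$: the constant part, using $\beta_1=2qdk(p+1)$ and $\beta_2=q^{1/(p+1)}$, contributes $\cO(Ldkq(\log q)J)$ to the effective norm and combines with $L\Delta$ inside a single $(\cdot)^{1+1/p}$, producing the first term of the stated bound. The remaining $s$-dependent $\log(1/(J|s|))$ factor is self-referential; I would pin $s$ to be inverse-polynomial in $|t|JLqdk/\varepsilon$ using the other two conditions, then back-substitute so that this log becomes a polylogarithmic correction, absorbed by $\tilde\cO$. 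The linear-in-$s$ piece $\beta_3 J^2 N|s|$ is the interesting one: substituting $s=t/r$ produces two factors of $1/r$ inside the parenthesis, so the condition becomes $(L\beta_3 J^2 N)^{p+1}t^{2(p+1)}/r^{2p+1}\lesssim\varepsilon$, giving
\begin{align}
r=\cO\!\left((L\beta_3 J^2 N)^{1/2+1/(4p+2)}\,t^{1+1/(2p+1)}/\varepsilon^{1/(2p+1)}\right).
\end{align}
Using $\beta_3=\cO(kd^2)$ and the stated relations $N\le ML\le dN$ (so $LN\le L^2 M$), and absorbing $d,k,\kappa$-dependent constants into $\cO$, this matches the second term of the theorem. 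The additive $1$ in the final bound ensures $r\ge 1$ and also guarantees the hypothesis $J|s|\le 1$ of Thm.~\ref{thm:mainthm} when $t$ itself is small.

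The main obstacle is not any deep estimate but the bookkeeping around the $\log(1/(J|s|))$ term: one must first control $s$ through the two polynomial-in-parameters conditions, then back-substitute to confirm that the log contributes only a polylogarithmic correction in $|t|JLqdk/\varepsilon$. A secondary but routine nuisance is matching the specific powers of $L$, $d$, and $M$ in the stated second term, which requires the inequality $LN\le L^2M$ together with the loose bound $\beta_3=\cO(kd^2)$ to convert $L\beta_3 J^2 N$ into $\cO(L^2 dMJ^2)$ up to constants. Once these two technicalities are dispatched, the claimed complexity follows by taking the maximum of the three conditions on $r$.
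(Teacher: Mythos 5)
Your proposal follows the paper's proof essentially step for step: the same $2r\,\tilde\epsilon(\Delta')$ telescoping bound, the same strategy of enforcing the error condition separately for each term of the effective norm, and the same back-substitution trick for the self-referential logarithm. The only substantive difference is that the paper makes that last step rigorous by analyzing $f(z)=(\log(1/z))^{p+1}z^{p}$ and exhibiting an explicit admissible $z_1'$ (leading to the condition in Eq.~\eqref{eq:scond4}), whereas you assert the back-substitution works; that assertion is correct but is the one place a referee would ask you to fill in details. There is also a small quantitative slip in your second term: you run the linear-in-$s$ case on the main-text form $\beta_3 J^2 N|s|$, which was itself obtained from $\tfrac{1}{\lambda}\alpha'|s|ML$ via $ML\le dN$, and then you convert back using $N\le ML$; this round trip costs an extra factor of $kd$, so your route yields $\bigl(kL^2d^2MJ^2\bigr)^{\frac12+\frac1{4p+2}}$ rather than the stated $\bigl(L^2dMJ^2\bigr)^{\frac12+\frac1{4p+2}}$. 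Since the theorem keeps $d$ explicit (and $d$ is allowed to grow with $N$ in the later comparison), you should instead impose the condition directly on the term $\tfrac{1}{\lambda}\alpha'|s|ML$ of Eq.~\eqref{eq:D'def}, as the paper does in Eq.~\eqref{eq:scond2}; this gives $4L\alpha'ML/\lambda=\cO(kdJ^2ML^2)$ and hence the stated power of $d$. With that adjustment your argument reproduces the theorem.
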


\begin{proof}
	Let $r=t/s$ be the Trotter number, i.e., the number of ``segments'' in the product formula, each approximating the evolution $U(s)$ for short time $s$. We assume $s \ge 0$ and $t\ge 0$ for simplicity, and the result  for $t\le 0$ simply follows from replacing $t \rightarrow |t|$.
	Note that $U(s) \Pi_{\le \Delta}=\Pi_{\le \Delta}
	U(s) \Pi_{\le \Delta}$ and $U(t)=(U(s))^r$.
We use the identity
	\begin{align}
	(U(t)&-(W_p(s))^r )\Pi_{\le \Delta}  = \sum^{r-1}_{r'=0} (W_p(s))^{r'} (U(s)-W_p(s)) (U(s))^{r-r'-1} \Pi_{\le \Delta}\\
	&= \sum^{r-1}_{r'=0} (W_p(s))^{r'} (U(s)-W_p(s) ) \Pi_{\le \Delta} (U(s))^{r-r'-1} \Pi_{\le \Delta} \;.
\end{align}

Using the triangle inequality and $\|W_p(s)\|= \|U(s)\|=1$ for unitary operators, we obtain

	\begin{align}
	\| (U(t) - (W_p(s))^r)\Pi_{\le \Delta} \| & \le r  \|(U(s)-W_p(s)) \Pi_{\le \Delta} \|\\
	& \le r  \tilde \epsilon(\Delta')\;,
	\end{align}
	where $\tilde \epsilon(\Delta')= \tilde{\gamma} (L\Delta' s)^{p+1}$ and $\tilde \gamma$ is a constant that can be determined from the error bounds of product formulas -- see Thm.~1, Eq.~\eqref{eq:tildeepsilon}.
	Thus, for overall error bounded by $\varepsilon$,
	it suffices to choose $\tilde \epsilon(\Delta') \le \varepsilon/r$ or, equivalently,
	\begin{align}
	\label{eq:scond0}
	\tilde{\gamma} (L\Delta'  s)^{p+1} \le \frac \varepsilon {t} s \;
	\end{align}

    To set some conditions in $s$, in addition to $Js \le 1$,
	we note that each of the terms in Eq.~\eqref{eq:D'def}
	that define the effective norm
	can be dominant depending on $s$, $\Delta$, and other parameters. Then, to obtain the overall complexity of product formulas, we will analyze four different cases as follows.
	In the first case, we assume that $\Delta$ is the dominant term in $\Delta'$, and we require
	\begin{align}
	\tilde{\gamma} (4L  \Delta s)^{p+1} \le \frac \varepsilon {t} s \;,
	\end{align}
	which is satisfied as long as 
	\begin{align}
	\label{eq:scond1}
	s \le s_1=\left(\frac \varepsilon {\tilde \gamma t } \right)^{1/p}
	\frac 1 { ( 4  L \Delta )^{1+1/p} } \;.
	\end{align}

	In the second case, we assume that $\alpha's M L/\lambda$ is the dominant term in Eq.~\eqref{eq:D'def} and impose
	\begin{align}
	\tilde{\gamma} \left(4L \frac{ \alpha' M L s^2}{\lambda} \right)^{p+1} \leq \frac{\varepsilon}{t} s,
	\end{align}
	which implies
	\begin{align}\label{eq:scond2}
	s \leq s_2= \left( \frac{\varepsilon}{\tilde \gamma t} \right)^{\frac{1}{2p+1}} \left( \frac{\lambda}{4   \alpha' M L^2} \right)^{\frac{1}{2} + \frac{1}{4p+2}}. 
	\end{align}
	
	In the third case, we assume that $(q\log q)/\lambda$ is the dominant term in Eq.~\eqref{eq:D'def} and impose
	\begin{align}
	\tilde{\gamma} \left(4L \frac{ q \log q \; s}{\lambda} \right)^{p+1} \leq \frac{\varepsilon}{t} s,
	\end{align}
	which implies
	\begin{align}\label{eq:scond3}
	s \leq s_3= \left( \frac{\varepsilon}{\tilde{\gamma} t} \right)^{\frac{1}{p}} \left( \frac{\lambda}{4 L q \log q} \right)^{1 + 1/p}. 
	\end{align}

	In the fourth case, we assume that $\frac{q}{\lambda} (p+1) \log\left(\frac{1}{Js}\right)$ is the dominant term in Eq.~\eqref{eq:D'def} and impose
	\begin{align}
	\tilde{\gamma} \left(4 L \frac{q }{\lambda} (p+1) \log\left(\frac{1}{J s}\right) s \right)^{p+1} \leq \frac{\varepsilon}{t} s \;,
	\end{align}
	under the assumption $J s \le 1$.
	Equivalently, if $z= Js \leq 1$ and 
	defining the function $f(z)=(\log(1/z))^{p+1} z^{p}$, $p \ge 1$, we impose
	\begin{align}
	\label{eq:scond4a}
	f(z) \leq X \;, 
	\end{align}
	where 
	\begin{align}
	X= \frac{\varepsilon}{ \tilde{\gamma} t }  \left( \frac{\lambda}{4 L q(p+1) } \right)^{p+1} J^p.
	\end{align}
	To set a fourth condition in $s$ we could then compute $X$
	from the inputs of the problem and find a range of values of $z$ for which Eq.~\eqref{eq:scond4a} is satisfied. We can also obtain such a range analytically as follows.
	The function $f(z)$ increases from $f(0)=0$, attains its maximum at $z_M= e^{-\frac{p+1}{p}}$ (hence $e^{-2} \le z_M \le e^{-1}$
	for all $p \ge 1$), and then decreases to $f(1)=0$.
	Additionally, $f(z) \le ((1+1/p)/e)^{p+1} \le 4/e^2 \approx 0.54$ for all $0 \le z \le 1$.  In particular, 
	if $X \ge ((1+1/p)/e)^{p+1}$ then Eq.~\eqref{eq:scond4a} is readily satisfied for all $z \le 1$ and no additional condition in $s$
	will be required in this case (this happens, for example, for sufficiently small values of $t$). More generally, for a given $X$,
	we can solve for $f(z)=X$. If there are two solutions $z_{1,2}\le 1$ for $z$,
	we consider the smaller one ($z_1 < z_2$) and the relevant range for $z$ to satisfy Eq.~\eqref{eq:scond4a} 
	is $[0,z_1]$. It will then suffice to impose that $z$ belongs to a range $[0,z'_1]$, where $z'_1 \le z_1$. To this end,
	we define $z_1'=X^{1/p}/(e^2\log(e^2/X))^{(p+1)/p}$ and,
	under the assumption $X \lesssim 0.54$, we have $z_1'< e^{-2}$.
	Additionally,
	\begin{align}
	f(z) & = (\log(1/z))^{p+1} z^{p} \\
	& \le \left(\log \left(\frac{e^2 \log^{(p+1)/p}(e^2/X)}{X^{1/p}} \right) \right)^{p+1} \frac{X}{(e^2 \log(e^2/X))^{p+1}}\\
	& \le \left(\log \left(\frac{e^2 \log^{2}(e^2/X)}{X} \right) \right)^{p+1} \frac{X}{(e^2 \log(e^2/X))^{p+1}}\\
	\\
	& \le \left(3 \log (e^2/X) \right)^{p+1} \frac{X}{(e^2 \log(e^2/X))^{p+1}}\\
	& \le  X \;,
	\end{align}
	where we used $X^{1/p} \ge X$ and $\log(e^2/X) \le e^2/X$ for $X \le 1$.
	This is the condition of Eq.~\eqref{eq:scond4a}.
	Then, for the fourth condition in $s$, we impose $z \le z_1'$ or, equivalently,
	\begin{align}
	s\leq s_4&=
	\left(\frac{\varepsilon}{\tilde{\gamma} t}\right)^{1/p} \left( \frac{\lambda}{4L q (p+1)} \right)^{1+1/p} \frac 1 {\left( e^2 \log \left( \frac{e^2 \tilde \gamma t} {\varepsilon J^p} \left( \frac{4Lq (p+1)} {\lambda}\right)^{p+1} \right) \right)^{1+1/p}} \\
	&=
	\label{eq:scond4}
	\left(\frac{\varepsilon}{\tilde{\gamma} t}\right)^{1/p} \left( \frac{\lambda}{4L q (p+1)} \right)^{1+1/p} \frac 1 {\left( e^2 \log \left( \frac{e^2 \tilde \gamma t J} {\varepsilon} (8Lq (p+1) dk)^{p+1} \right) \right)^{1+1/p}}
	\; .
	\end{align}
	Except for a mild polylogarithmic correction in $t J L q d k /\varepsilon$ -- the third factor -- this condition is similar to the first and third ones.
	\\
	
	Then, if $ J s \le 1$ and $s$ additionally satisfies Eqs.~\eqref{eq:scond1},~\eqref{eq:scond2},~\eqref{eq:scond3}, and~\eqref{eq:scond4}, we obtain the desired condition of Eq.~\eqref{eq:scond0}. The product formulas under consideration [Eq.~1] are such that $\alpha'=\kappa \alpha = \cO(J)$ and $\tilde \gamma=\cO(1)$, and we consider the case where $p$ is a $\cO(1)$ constant. The conditions in $s$ allow us to obtain 
	a sufficient condition for the Trotter number as follows:
	\begin{align}
	&r  = t/s \\
	\label{eq:rbound1}
	&= \tilde{\mathcal{O}} \left(\frac{t^{1+ \frac 1 p}}{\varepsilon^ {\frac 1 p}} \left( L \Delta +  L d k q (\log q)J \right)^{1+ \frac 1 p} \right) + \cO\left( \frac{t^{1+ \frac 1 {2p+1}}}{\varepsilon^{ \frac 1 {2p+1}}}(L^2 d M  J^2)^{\frac 1 2 + \frac  1 {4p+2}}
	\right) \;,
	\end{align}
	where the $\tilde \cO$ notation hides a polylogarithmic
	factor in $tJLqdk/\varepsilon$ coming from Eq.~\eqref{eq:scond4}. 
	For the case when $q$ is $\cO(L)$, $k=\cO(1)$, $d= \cO(1)$, and hence $L= \cO(1)$ and $M=\cO(N)$, and considering the asymptotic limit, we obtain
	\begin{align}
	\label{eq:rboundMainTextapp}
	r= \tilde{\mathcal{O}} \left(\frac{(t(\Delta+J))^{1+ \frac 1 p}}{\varepsilon^{\frac 1 p}}\right)  + \cO \left( \frac{(t J \sqrt N)^{1+\frac 1 {2p+1}}}{\varepsilon^{\frac 1 {2p+1}}}\right) \;.
	\end{align}
	
\end{proof}

\subsection*{Supplementary Note 5: Comparison with known results on product formulas}

We compare our result on the complexity of product formulas with those in Ref.~\cite{CST+19} 
that are the state of the art.  When no assumption is made
for the initial state, the Trotter number stated in
Ref.~\cite{CST+19} for $k$-local Hamiltonians is
\begin{align}
\tilde{r}= \cO\left( \frac{t^{1+1/p}}{\varepsilon^{1/p}} \|H\|_{ind-1} \|H\|^{1/p}_1 \right) \;.
\end{align}
Here, $\|H\|_1$ is the 1-norm of $H$, given by $\sum_{l=1}^L \|H_l\|$ in our case, and 
$\|H\|_{ind-1}$ is the so-called induced 1-norm of $H$.
The latter is defined as follows. We write $H=\sum_{j_1,\ldots,j_k=1}^N h_{j_1,\ldots,j_k}$, where each $h_{j_1,\ldots,j_k}$ includes the $k$-local interaction terms of qubits labeled as $j_1,\ldots,j_k$ in $H$. Then,
\begin{align}
\|H\|_{ind-1}:= \max_l \max_{j_l} \sum_{j_1, \ldots, j_{l-1}, j_{l+1}, \ldots, j_k=1}^N \|h_{j_1, \ldots, j_k}\| \;.
\end{align}
That is, we fix certain qubit $j_l$ and consider all the interaction terms that contain that qubit.
For a degree $d$ Hamiltonian with $k$-local interaction terms (not necessarily geometrically local), each of strength at most $J$, $\|H\|_{ind-1} \leq dJ$. Furthermore, $\|H\|_1$ can be upper bounded as 
$\|H\|_1 \leq JML \le JdN$.
As a result, for a $k$-local Hamiltonian as above, the best known upper bound for the Trotter number is
\begin{align}\label{eq:rboundUsingLocality}
\tilde{r}= \cO\left( \frac{t^{1+1/p}}{\varepsilon^{1/p}} (dJ)^{1+1/p} N^{1/p} \right).
\end{align}

To compare our main result with Eq.~\eqref{eq:rboundUsingLocality}, we express Eq.~\eqref{eq:rbound1} in terms of $d$ and $N$. We recall that the total number of local terms in $H$ is $ML \le dN$, $L \le dk+1$, and we assume that $k=\cO(1)$, $M=\cO(N)$, and $q= \mathcal{O}(L)=\cO(d)$
for the $p$-th order product formula. Then, in the asymptotic limit,
\begin{align}
\label{eq:rbound2}
r
&= \tilde{\mathcal{O}} \left(\frac{t^{1+ \frac 1 p}}{\varepsilon^ {\frac 1 p}} \left( d \Delta +  d^3 J \right)^{1+ \frac 1 p}\right) + \cO \left(\frac{t^{1+ \frac 1 {2p+1}}}{\varepsilon^{ \frac 1 {2p+1}}}(d^3 N  J^2)^{\frac 1 2 + \frac  1 {4p+2}}
\right) \;.
\end{align}
The results for various values of $p$ and $k=\cO(1)$ are in Table~\ref{fig:Comparison2}.  

\vspace{0.2 cm}

\begin{table}[htb]
	\begin{tabular}{ |p{1cm}||p{4cm}|p{9cm}| }
		\hline
		\multicolumn{3}{|c|}{Comparison of asymptotic complexities (Trotter number) as a function of $\Delta, J, d, N, \varepsilon, t$} \\
		\hline
		Order & Previous result ($\tilde{r}$) & Low-energy simulation ($r$)\\
		\hline
		$p=1$  & $\cO(t \frac{t}{\varepsilon}  d^2  N J^2)$    & $ \tilde \cO(t\frac{t}{\varepsilon} (d\Delta +d^3 J)^2 ) + \cO( t \left( \frac{t}{\varepsilon} \right)^{1/3}  d^2 N^{2/3} J^{4/3})$ \\
		\hline
		$p=2$  & $\cO(t \left(\frac{t}{\varepsilon}\right)^{1/2} d^{3/2}  N^{1/2} J^{3/2} $  )  & $ \tilde \cO( t\left(\frac{t}{\varepsilon}\right)^{1/2} (d \Delta + d^3 J)^{3/2} ) + \cO( t \left( \frac{t}{\varepsilon} \right)^{1/5} d^{9/5} N^{3/5} J^{6/5} )$ \\
		\hline
		$p=3$  & $\cO(t \left(\frac{t}{\varepsilon}\right)^{1/3} d^{4/3}   N^{1/3} J^{4/3}$  )  & $ \tilde \cO( t\left(\frac{t}{\varepsilon}\right)^{1/3} (d \Delta + d^3 J)^{4/3} ) + \cO( t \left( \frac{t}{\varepsilon} \right)^{1/7}  d^{12/7} N^{4/7} J^{8/7})$ \\
		\hline
	\end{tabular}
	\caption{The comparison between the best-known worst-case complexity of $p$-th order product formulas~\cite{CST+19} and our result for the low-energy simulation,  $p=1, 2, 3$.
	}
	\label{fig:Comparison2}
\end{table}

By setting a constraint on the initial state, our low-energy simulation result provides an advantage 
in certain regimes where $d$ may grow with $N$.
In the following, we will assume that, e.g., $\Delta =\cO( d^2 J)$ and fix the value of $t/\varepsilon$, to simplify the expressions. 
Under these assumptions, for $p=1$ and $d=\cO(N^{1/4})$,
the terms in both columns of Table~\ref{fig:Comparison2}
are of order $N^{3/2}$. 
For $p=2$ and $d=\mathcal{O}(N^{1/6})$,
the terms in both columns of Table~\ref{fig:Comparison2}
are of order $N^{3/4}$. 
For $p=3$ and $d = \mathcal{O}(N^{1/8})$
the terms in both columns of Table~\ref{fig:Comparison2}
are of order $N^{1/2}$. 
For general $p \ge 1$ and $d=\cO(N^{\frac{1}{2(p+1)}})$, 
both complexities are comparable and of order $N^{3/(2p)}$.

When $d$ is fixed and $N$ grows, our complexities
may be worse than those obtained in Ref.~\cite{CST+19}.
One reason for this is because the effective norms
may grow large in this case and the error bound that we use for product formulas using effective operators 
do not exploit any structure such as the locality of interactions. Better error bounds may be possible in this case, resulting in improved complexities. However, even if $N$ is large, our results regain an advantage at certain values of $t$, in particular if we scale $t/\varepsilon$
with $N$. Doing so will set a bound on the effective norm
so that the second term in our complexities stops dominating.

The special case where $k=\cO(1)$, $d=\cO(1)$, and $\Delta$ is also a constant independent of other parameters that specify $H$ can be directly obtained from Table~2 and is given in Table~1 in the main text.

\end{document}